\documentclass[conference]{IEEEtran}
\IEEEoverridecommandlockouts
\usepackage{cite}
\usepackage{amsmath,amssymb,amsfonts}
\usepackage{algorithmic}
\usepackage{graphicx}
\usepackage{textcomp}
\usepackage{xcolor}
\usepackage{microtype}
\usepackage{bm}
\usepackage{amsthm}
\newcommand{\quotes}[1]{``#1''}
\newtheorem{theorem}{Theorem}
\newtheorem{lemma}{Lemma}
\newtheorem{d1}{Definition}

\newcommand{\h}{\hspace*{0.2in}}

\graphicspath{ {./figures/} }
\def\BibTeX{{\rm B\kern-.05em{\sc i\kern-.025em b}\kern-.08em
    T\kern-.1667em\lower.7ex\hbox{E}\kern-.125emX}}
\begin{document}

\title{Linearizable Replicated State Machines with Lattice Agreement\\
\thanks{Supported by CNS-1812349, NSF CNS-1563544, Huawei Inc., and the Cullen Trust for Higher Education Endowed Professorship.}
}

\author{\IEEEauthorblockN{1\textsuperscript{st} Xiong Zheng}
\IEEEauthorblockA{\textit{Electrical and Computer Engineering} \\
\textit{The University of Texas at Austin}\\
Austin, USA \\
zhengxiongtym@utexas.edu}
\and
\IEEEauthorblockN{2\textsuperscript{nd} Vijay K. Garg}
\IEEEauthorblockA{\textit{Electrical and Computer Engineering} \\
\textit{The University of Texas at Austin}\\
Austin, USA \\
garg@ece.utexas.edu}
\and
\IEEEauthorblockN{3\textsuperscript{rd} John Kaippallimalil}
\IEEEauthorblockA{\textit{Wireless Access Laboratories} \\
\textit{Huawei}\\
Plano, USA \\
John.Kaippallimalil@huawei.com}
}

 \maketitle

\begin{abstract}
This paper studies the lattice agreement problem in asynchronous systems and explores its application to building linearizable replicated state machines (RSM). First, we propose an algorithm to solve the lattice agreement problem in $O(\log f)$ asynchronous rounds, where $f$ is the number of crash failures that the system can tolerate. This is an exponential improvement over the previous best upper bound. Second, Faleiro et al have shown in [Faleiro et al. PODC, 2012] that combination of conflict-free data types and lattice agreement protocols can be applied to implement linearizable RSM. They give a Paxos style lattice agreement protocol, which can be adapted to implement linearizable RSM and guarantee that a command can be learned in at most $O(n)$ message delays, where $n$ is the number of proposers. Later on, Xiong et al in [Xiong et al. DISC, 2018] give a lattice agreement protocol which improves the $O(n)$ guarantee to be $O(f)$. However, neither protocols is practical for building a linearizable RSM. Thus, in the second part of the paper, we first give an improved protocol based on the one proposed by Xiong et al. Then, we implement a simple linearizable RSM using the our improved protocol and compare our implementation with an open source Java implementation of Paxos. Results show that better performance can be obtained by using lattice agreement based protocols to implement a linearizable RSM compared to traditional consensus based protocols.  
\end{abstract}

\begin{IEEEkeywords}
Lattice Agreement, Generalized Lattice Agreement, Replicated State Machine,  Consensus, Paxos.
\end{IEEEkeywords}

\section{Introduction}
Lattice agreement, introduced in \cite{b13}, to solve the atomic snapshot problem \cite{b20} in shared memory, is an important decision problem in distributed systems. In this problem, $n$ processes start with input values from a lattice and need to decide values which are comparable to each other in spite of $f$ process failures. 

There are two main applications of lattice agreement. First,
Attiya et al \cite{b13} give a $\log n$ rounds algorithm to solve the lattice agreement problem in synchronous message systems and use it as
a building block to solve the atomic snapshot object. 
Second, Faleiro et al\cite{b6} propose the problem of generalized lattice agreement
and demonstrate that the combination of conflict-free data types (CRDT) and generalized lattice agreement protocols can implement a special class of RSM and provides linearizability \cite{b23}. We call this class of state machines as Update-Query (UQ) state machines. The operations of UQ state machines can be classified into two kinds: updates (operations that modify the state) and queries or reads (operations that only return values and do not modify the state). An operation that both modifies the state and returns a value is not supported. Besides, all the operations are assumed to be deterministic. In this paper, when we talk about linearizable replicated state machine, we actually mean UQ state machine. We call such a replicated state machine built using lattice agreement as $LaRSM$. 

For the lattice agreement problem in asynchronous message system, Faleiro et al first presents a Paxos style protocol in \cite{b6} when a majority of processes are correct. Their algorithm needs $O(n)$ asynchronous round-trips in the worst case. They also propose a protocol for the generalized lattice agreement problem, adapted from their protocol for lattice agreement, which requires $O(n)$ message delays for a value to be learned. A procedure for building a linearizable RSM is also given, which requires $O(n)$ message delays for a request to be delivered, by combining CRDT and the their protocol for generalized lattice agreement. 
Later, an algorithm to solve the lattice agreement in asynchronous systems with $O(f)$ asynchronous round-trips was proposed by Xiong et al in \cite{b10}. They also propose a protocol for the generalized lattice agreement which improves the $O(n)$ message delays to $O(f)$. In this work, we improve the upper bound for the lattice agreement problem to be $O(\log f)$ by giving an algorithm.

Since lattice agreement can be applied to implement a linearizable RSM, if we can solve lattice agreement problem efficiently, we may not need consensus based protocol in some cases. From the theoretical perspective, using lattice agreement instead of consensus is promising, since lattice agreement has been shown to be a weaker decision problem than consensus. In synchronous message passing systems, consensus cannot be solved in fewer than $f + 1$ rounds \cite{b21}, but lattice agreement can be solved in $\log f + 1$ rounds \cite{b10}. In asynchronous systems, the consensus problem cannot be solved even with one failure \cite{b8}, whereas the lattice agreement problem can be solved in $O(\log f)$ when a majority of processes is correct. 

Replicated state machine \cite{b11} is a popular eager technique for fault tolerance in  a distributed system. Traditional replicated state machines typically enforce strong consistency among replicas by using a consensus based protocol to order all requests from the clients. In this approach, each replica executes all the request in an identical order to ensure that all replicas are at the same state at any given time. The most popular consensus based protocol for building a replicated state machine is Paxos\cite{b1,b2}. In the Paxos protocol, processes are divided into three different roles: proposer, acceptor and learner. Since the initial proposal of Paxos, many variants have been proposed. FastPaxos \cite{b5} reduces the typical three message delays in Paxos to two message delays by allowing clients to directly send commands to acceptors. MultiPaxos \cite{b24} is the typical deployment of Paxos in the industrial setting. It assumes that usually there is a stable leader which acts as a proposer, so there is no need for the first phase in the basic Paxos protocol. CheapPaxos \cite{b25} extends basic Paxos to reduce the requirement in the number of processors. Even though in the Paxos protocol, there could be multiple proposers, usually only one proposer is used in practice due to its non-termination problem when there are multiple proposers. Thus, all of them suffer from the performance bottleneck of a leader. Also, the unbalanced communication pattern limits the utilization of bandwidth available in all of the network links connecting the servers. SPaxos \cite{b26} is a Paxos variant which tries to offload the leader by disseminating clients to all replicas. However, the leader is still the only process which can order a request.
Although \cite{b6} has demonstrated that generalized lattice agreement protocol can be applied to implement a linearizable RSM, both the algorithms proposed in \cite{b6} and \cite{b10} are impractical for building a linearizable RSM, due to a problem we will explain in later section. Thus, we also propose an improved algorithm for the generalized lattice agreement problem. The improvements in the proposed algorithm are specifically designed to make it practical to build a linearizable RSM. 

In summary, this paper makes the following contributions:
\begin{itemize}
\item We present an algorithm, $AsyncLA$, to solve the lattice agreement in asynchronous system in $O(\log f)$ rounds, where $f$ is the number of maximum crash failures. This bound is an exponential improvement to the previously known best upper bound of $O(f)$ by \cite{b10}. 

\item We give an improved algorithm for the generalized lattice agreement protocol based on the one proposed in \cite{b10} to make it practical to implement a linearizable RSM. We also present optimizations for the procedure proposed in \cite{b6} to implement a linearizable RSM from a generalized lattice agreement protocol.

\item We implement a simple linearizable RSM in Java by combining a CRDT map data structure and our improved generalized lattice agreement algorithm. We demonstrate its performance by comparing with SPaxos. Our experiments show that LaRSM achieves around 1.3x times throughput than SPaxo and lower operation latency.  
\end{itemize}
\section{System Model and Problem Definitions}
\subsection{System Model}
We consider a distributed message passing system with $n$ processes, $p_1, \dots, p_n$, in a completely connected
topology. We only consider asynchronous systems, which means that there is no upper bound
on the time for a message to reach its destination. The model assumes that processes may
have crash failures but no Byzantine failures. The model parameter $f$ denotes the maximum number of processes that may crash in a run. We do not assume that the underlying communication system is reliable. The peer to peer network could be partitioned unpredictably. We need to build a replicated state machine which satisfy partition tolerance and provide as much availability and consistency as possible. 

\subsection{Lattice Agreement}
In the lattice agreement problem, each process $p_i$ can propose a value $x_i$ in a join semi-lattice ($X$, $\leq$, $sqcup$) and must decide on some output $y_i$ also in $X$. An algorithm solves the lattice agreement problem if the following properties are satisfied:

\textbf{Downward-Validity}: For all $i \in [1..n]$, $x_i \leq y_i$. 

\textbf{Upward-Validity}: For all $i \in [1..n]$, $y_i \leq \sqcup\{x_1,...,x_n\}$.

\textbf{Comparability}: For all $i \in [1..n]$ and $j 
\in [1..n]$, either $y_i \leq y_j$ or $y_j \leq y_i$. \\

The definition of height of a value and height of a lattice is given as below:
\begin{d1}
The height of a value $v$ in a lattice $X$ is the length of longest path from any minimal value to $v$, denoted as $h_X(v)$ or $h(v)$ when it is clear. 
\end{d1}

\begin{d1}
The height of a lattice $X$ is the height of its largest value, denoted as $h(X)$.
\end{d1}

\subsection{Generalized Lattice Agreement}
In the generalized lattice agreement problem, each process may receive a possibly infinite sequence of values belong to a lattice at any point of time. Let $x_i^{p}$ denote the $i$th value received by process $p$. The aim is for each process $p$ to learn a sequence of output values $y_j^p$ which satisfies the following conditions:

\textbf{Validity}: any learned value $y_j^p$ is a join of some set of received input values.

\textbf{Stability}: The value learned by any process $p$ is non-decreasing: $j < k \implies y_j^{p} \leq y_k^p$.

\textbf{Comparability}: Any two values $y_j^p$ and $y_k^q$ learned by any two process $p$ and $q$ are comparable. 

\textbf{Liveness}: Every value $x_i^p$ received by a correct process $p$ is eventually included in some learned value $y_k^q$ of every correct process $q$: i.e, $x_i^p \leq y_k^q$. \\

\section{Asynchronous Lattice Agreement in $O(\log f)$ Rounds}
In this section, we give an algorithm to solve the lattice agreement problem in asynchronous system which only needs $O(\log f)$ asynchronous rounds. The proposed algorithm is inspired by algorithms in \cite{b13} and \cite{b10}. The basic idea is to apply a \textit{Classifier} procedure to divide processes into \textit{master} and \textit{slave} groups and ensure that any process in the master group have values great than or equal to any process in the slave group. The \textit{Classifier} procedure is shown in Figure \ref{fig:classifier}. The main algorithm, {\em AsyncLA}, is shown in Figure \ref{fig:async_la}. Before we formally present the algorithm, we give some definitions.

\begin{d1}[label]
Each process has a $label$, which serves as a knowledge threshold and is passed as the threshold value $k$ whenever the process calls the \textit{Classifier} procedure.
\end{d1}

\begin{d1}[group]
A $group$ is a set of processes which have the same label. The label of a group is the label of the processes in this group. Two processes are said to be in the same group if and only if they have the same labels.
\end{d1}

Now, let us look at the {\em Classfier} procedure. Note that the main functionality of the {\em Classifier} is to divide the processes in the same group into two groups: the {\em master} group and the {\em slave} group and ensure that processes in {\em master} group have values greater than or equal to processes in \textit{slave} group. Details of the \textit{Classifier} procedure for $p_i$ are shown below:\\
\h Line 0: $p_i$ set its $acceptVal$ to be empty, which is used to store the set of $<value,label>$ pairs received from all processes.
Note that this includes values from processes that are not in the same group.\\
\h Line 1-2: $p_i$ sends a \textit{write} message containing its input value $v$ and the threshold value $k$ to all and wait for $n - f$ \textit{write\_ack}s. This step is to ensure the value and label of $p_i$ is in the $acceptVal$ set of $n - f$ processes. \\
\h Line 3-5: $p_i$ sends a \textit{read} message with its current round number $r$ to all processes and wait for $n - f$ \textit{read\_ack}s. It collects all the values associated with the same label $k$ in a set $U$, i.e, collects all values from processes within the same group. It may seem that line 3-5 are performing the same functionality as \textit{line} 1-2 and there is no need to have this part, since both are sending a message to all and waiting for $n - f$ acks. However, this part is actually the key of the \textit{Classifier} procedure. 
\\
\h Line 6-14: $p_i$ performs classification based on received values. Let $w$ be the join of all received values in $U$. If the height of $w$ in lattice $L$ is greater than the threshold value $k$, then $p_i$ sends a \textit{write} message with $w$,$k$ and $r$ to all and waits for $n - f$ \textit{write\_ack}s with round number $r$. Then in \textit{line} 10-12, it takes the join of $w$ and all the values contained in the \textit{write\_ack}s  from the same group denoted as $w'$. It returns $(w',master)$ as output of the \textit{Classifier} procedure in which \textit{master} indicates its classified into $master$ group in the next round. Otherwise, it returns its own input value $v$ and \textit{slave}. \\
\h In each round, when $p_i$ receives a $write$ message from some process $p_j$, it includes the associated value and label into its \textit{acceptVal} set. Then, it sends back a \textit{write\_ack} message with its current \textit{acceptVal} set to $p_j$. Similarly, when receiving a \textit{read} message from $p_j$, it sends back a \textit{read\_ack} message with its current \textit{acceptVal} set to $p_j$.  

Now we discuss the main algorithm \textit{AsyncLA}. The basic idea of \textit{AsyncLA} is to construct a binary tree of {\em Classifier}s and each process goes through this binary tree. After a process completes execution of one {\em Classifier} node, if it is classified as {\em master}, it goes to the right subtree, otherwise, it goes to the left subtree. Notice that after one round of exchanging values and taking joins, each process must know at least $n - f$ values. Since there are at most $n$ values, we set the threshold value of the root {\em Classifier} to be $\frac{(n - f) + n}{2} = n - \frac{f}{2}$. Thus, the height of the binary tree is $\log f$. Each process has a label, which is equal to the threshold value of the {\em Classifier} node it is currently invoking. So, the initial label for each process is $n - \frac{f}{2}$. Let $y_i$ denote the output value of $p_i$. The algorithm for $p_i$ proceeds in asynchronous rounds. Let $v_i^r$ denote its value at the beginning of round $r$.

At round 0, $p_i$ sends a \textit{value} message with its initial input $x_i$ to all and wait for $n - f$ \textit{value} messages for round 0 from all. It sets $v_i^1$ as the join of all values received in this round. This round is to make sure that height of the sublattice formed by all current values has height at most $f$. Then, at each round $r$ between 1 to $\log f$, $p_i$ invokes the \textit{Classifier} procedure with its current value $v_i^r$ and current label $l_i$ as input. Based on the output of the classifier, $p_i$ adjust its label by some value. If it is classified as master, then it increases its label by $\frac{f}{2^{r+1}}$,which is equals to the threshold value of the next {\em Classifier} it will invoke. Otherwise, it reduces its label by $\frac{f}{2^{r + 1}}$. At the end of round $\log f$, $p_i$ outputs $v_i^{\log f + 1}$ as its decision value.

\begin{figure}[htbp] 
\fbox{\begin{minipage}[t]  {3.3in}
\noindent
\underline{{\bf \textit{Classifier}$(v, k, r)$:}} \\
$v$: input value \h $k$: threshold value \h $r$: round number\\

{\bf 0:} {\em acceptVal} := $\emptyset$ // set of $<$value, label$>$ pairs.\\

/* write */\\
{\bf 1:} Send $write(v, k, r)$ to all\\
{\bf 2:} wait for $n - f$ $write\_ack(-, -, r)$\\

/* read */\\
{\bf 3:} Send $read(r)$ to all\\
{\bf 4:} {\bf wait} for $n - f$ $read\_ack(-, -, r)$\\
{\bf 5:} Let $U$ be values contained in received acks with label equals $k$\\

/* Classification */\\
{\bf 6:} Let $w := \sqcup\{u: u \in U\}$\\
{\bf 7: if} {$h(w) > k$} \\
{\bf 8:} \h Send $write(w, k, r)$ to all \\
{\bf 9:} \h wait for $n - f$ $write\_ack(-, -, r)$\\
{\bf 10:} \h Let $U'$ be values contained in received acks with label equals $k$\\
{\bf 11:} \h Let $w' := w \sqcup\{u: u \in U'\}$ \\
{\bf 12:} \h {\bf return} ($w'$, \textit{master})\\
{\bf 13:} {\bf else}\\
{\bf 14:} \h {\bf return} ($v$, \textit{slave})\\

\bf{Upon} receiving $write(v_j, k_j, r_j)$ from $p_j$\\
\h $acceptVal := acceptVal ~\cup~ <v_j, k_j>$\\
\h Send $write\_ack(acceptVal, r_j)$ to $p_j$\\

\bf{Upon} receiving $read(r_j)$ from $p_j$\\
\h Send $read\_ack(acceptVal, r_j)$ to $p_j$\\
\end{minipage} 
} 
\caption{\textit{Classifier} \label{fig:classifier}}
\vspace*{-.2in}
\end{figure}

\begin{figure} [htbp]
\fbox{\begin{minipage}[t]  {3.3in}
\noindent
\underline{$\bm{AsyncLA}(x_i)$ for $p_i$:} \\
$x_i$: input value \\
$y_i$: output value\\

 $v_i^{r}$ : value of $p_i$ at the beginning of round $r$ \\
 $l_i := n - \frac{f}{2}$ // initial label\\

/* Round 0 */ \\
Send {\em value($x_i$, 0)} to all \\
{\bf wait for} $n - f$ messages of form {\em value(-, 0)} \\
Let $U$ denote the set of all received values \\

/* Round 1 to $\log f$ */ \\
$v_i^1 := \sqcup \{ u ~|~ u \in U\}$ \\
 {\bf for} {$r := 1$ to $\log f$}\\
 \h ($v_i^{r + 1}, class$) := $Classifier(l_i, v_i^{r}, r)$\\
 \h {\bf if} {$class = master$} \\
 \h \h $l_i := l_i + \frac{f}{2^{r + 1}}$\\
 \h {\bf else} \\
 \h \h $l_i := l_i - \frac{f}{2^{r + 1}}$\\
 {\bf end for}\\
 $y_i := v_i^{\log f + 1}$
\end{minipage}
} 
\caption{Algorithm $AsyncLA$ \label{fig:async_la}}
\vspace*{-.2in}
\end{figure}

\subsection{Proof of Correctness}
We now prove the correctness of the proposed algorithm. Let $w_i^r$ be the value of $w$ at $line$ 6 of the \textit{Classifier} procedure at round $r$.

\begin{lemma} \label{lem:cls}
Let $G$ be a group at round $r$ with label $k$. Let $L$ and $R$ be two nonnegative integers such that $L \leq k \leq R$. If $L < h(v_i^{r}) \leq R$ for every process $i \in G$, and $h(\sqcup\{v_i^{r}: i \in G\}) \leq R$, then \\
(p1) for each process $i \in M(G)$, $k < h(v_i^{r + 1}) \leq R$\\
(p2) for each process $i \in S(G)$, $L < h(v_i^{r + 1}) \leq k$\\
(p3) $h(\sqcup\{v_i^{r + 1}: i \in M(G)\}) \leq R$ \\
(p4) $h(\sqcup\{v_i^{r + 1}: i \in S(G)\}) \leq k$, and\\
(p5) for each process $i \in M(G)$, $v_i^{r + 1} \geq \sqcup\{v_i^{r + 1}: i \in S(G)\}$   
\end{lemma}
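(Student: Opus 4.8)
The plan is to follow the classical classifier analysis, tracking how round $r$ transforms $v_i^{r}$ into $v_i^{r+1}$ for each $i\in G$. Recall that process $i$ is a slave exactly when $h(w_i^{r})\le k$ at line~6, in which case $v_i^{r+1}=v_i^{r}$, and a master exactly when $h(w_i^{r})>k$, in which case $v_i^{r+1}=w_i':=w_i^{r}\sqcup\bigsqcup U_i'$, where $U_i'$ is the set of label-$k$ values returned in its line-9 write-acks. Two facts are used everywhere: $h$ is monotone ($a\le b\Rightarrow h(a)\le h(b)$, so $h(\bot)=0$), and any two $(n-f)$-quorums intersect because a majority of processes is correct.

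I would first dispose of the ``$\le R$'' bounds, (p1)-upper and (p3), through one causality induction: \emph{every value written with label $k$ in round $r$ is $\le\bigsqcup\{v_j^{r}:j\in G\}$.} The only label-$k$ writes in round $r$ are the line-1 writes, carrying $v_j^{r}$ for $j\in G$, and the masters' line-8 writes, carrying $w_j^{r}=\bigsqcup U_j$ (the set $U$ at line~5 for $j$); since $U_j$ contains only values recorded at the responders before they answered $j$'s read, hence written strictly earlier, induction on the send-time of a write proves the claim. In particular each master has $v_i^{r+1}=w_i'\le\bigsqcup\{v_j^{r}:j\in G\}$, so $h(v_i^{r+1})\le R$, giving (p1)-upper and, by joining over $M(G)$, (p3).

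Next the remaining parts of (p1), (p2), (p4). The lower bound in (p1) is immediate: $v_i^{r+1}=w_i'\ge w_i^{r}$ with $h(w_i^{r})>k$. For (p2), a slave has $v_i^{r+1}=v_i^{r}$, so the lower bound is the hypothesis; for the upper bound I intersect $i$'s own line-1 write-quorum with its line-3 read-quorum, and the causal chain ``$q$'s write-ack reaches $i$ before $i$ sends its read'' forces the intersecting $q$ to have recorded $(v_i^{r},k)$ before answering, so $v_i^{r}\in U_i$, $v_i^{r}\le w_i^{r}$, and $h(v_i^{r+1})\le k$. For (p4) (assume $S(G)\ne\emptyset$, else trivial), let $i_0$ be a slave whose line-1--2 write completes last in real time; every other slave $j$ finished its line-1 write before $i_0$ sent its read, so a process in the intersection of $j$'s write-quorum and $i_0$'s read-quorum has $(v_j^{r},k)$ recorded in time, placing $v_j^{r}$ in $U_{i_0}$. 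Hence $\bigsqcup\{v_j^{r+1}:j\in S(G)\}=\bigsqcup\{v_j^{r}:j\in S(G)\}\le w_{i_0}^{r}$, and $h(w_{i_0}^{r})\le k$ since $i_0$ is a slave.

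The crux is (p5): I must show $v_j^{r}\le w_i'$ for every master $i$ and slave $j$. Pick $q$ in the intersection of $j$'s line-3 read-quorum and $i$'s line-8 write-quorum. Since $j$ is classified slave, $w_j^{r}\not\ge w_i^{r}$ (otherwise $h(w_j^{r})\ge h(w_i^{r})>k$), so the answer $q$ gave to $j$'s read did not carry $(w_i^{r},k)$; hence $q$ answered $j$'s read \emph{before} it processed $i$'s line-8 write. Because $j$ sent its line-1 write before its line-3 read and a sender's messages are processed in order (FIFO, or enforced with per-sender sequence numbers), $q$ had already recorded $(v_j^{r},k)$ by the time it processed $i$'s line-8 write, so $(v_j^{r},k)\in U_i'$ and $v_j^{r}\le w_i'=v_i^{r+1}$; joining over $S(G)$ gives (p5). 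Pinning down this three-way order at $q$, together with its dependence on in-order delivery, is the step I expect to be delicate --- everything else is bookkeeping with quorum intersections and causal chains, and the degenerate cases $M(G)=\emptyset$ or $S(G)=\emptyset$ hold at once because an empty join is $\bot$, of height $0$.
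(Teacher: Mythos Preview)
Your treatment of (p1)--(p4) matches the paper's closely: the paper dismisses (p1)--(p3) as ``immediate from the \textit{Classifier} procedure'' and proves (p4) by exactly your last-slave argument (it is phrased as a contradiction, but the content is the same quorum-intersection with the slave that completes line~2 last).

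The gap is in (p5). Your argument hinges on FIFO delivery from $j$ to $q$: you need that $q$, having answered $j$'s line-3 read, must already have processed $j$'s line-1 write. The model in the paper does not give you this --- channels are not assumed reliable, let alone FIFO --- and ``enforced with per-sender sequence numbers'' modifies the algorithm rather than analyzing the one presented. Without FIFO your chosen witness $q\in Q_j^{\text{read}}\cap Q_i^{\text{line 8}}$ can answer $j$'s read, then process $i$'s line-8 write, and only later (or never) process $j$'s line-1 write; nothing then forces $v_j^{r}\in U_i'$.

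The paper sidesteps this by intersecting $j$'s \emph{line-1 write} quorum with $i$'s line-8 write quorum, not $j$'s read quorum. It then does a three-case split on the relative order of $j$'s line 1--2 and $i$'s line 8--9. The interesting case is concurrency: pick $k$ in that intersection; $k$ processed both $j$'s line-1 write (carrying $v_j^{r}$) and $i$'s line-8 write (carrying $w_i^{r}$), in some order. If $k$ handled $j$ first, then $k$'s ack to $i$ at line 9 already contains $(v_j^{r},k)$, so $v_j^{r}\in U_i'$ and you are done. If $k$ handled $i$ first, then $k$'s ack to $j$ already contains $(w_i^{r},k)$, and since line 5 collects values from \emph{all} received acks with label $k$ (not just the read-acks), $w_j^{r}\ge w_i^{r}$, contradicting $j\in S(G)$. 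No channel-ordering assumption is needed because the ``has it been written?'' information is carried in the write-acks themselves. If you swap your $Q_j^{\text{read}}$ for $Q_j^{\text{line 1}}$ and run this two-way case split at the intersecting process, your proof goes through in the stated model.
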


\begin{proof}
\textbf{(p1)-(p3):} Immediate from the \textit{Classifier} procedure.\\
\textbf{(p4):} Proved by contradiction. Let us assume that $h(\sqcup\{v_i^{r + 1}: i \in S(G)\}) > k$. Since $v_i^{r + 1} = v_i^{r}$ for each process $i \in S(G)$, we have $h(\sqcup\{v_i^{r}: i \in S(G)\}) > k$. Let process $j$ be the last one in $S(G)$ to complete $line$ 2. When process $j$ starts executing $line$ 3, all other processes which are in $S(G)$ have already written their values to at least a majority of processes, that is, for any process $i \in S(G) \wedge i \not = j $, a majority of processes have included $<v_i, k>$ into their $acceptVal$ set. Thus, process $j$ would receive $<v_i, k>$ for any process $i \in S(G) \wedge i \not = j$, since any two majority of processes have at least one intersection. Then, we have $w_j^r = \sqcup\{v_i^{r}: i \in S(G)\}$. Thus, $h(w_j^r) = h(\sqcup\{v_i^{r}: i \in S(G)\}) > k$, which means $j \in M(G)$, a contradiction.\\
\textbf{(p5):} To prove $(p5)$, we need to show for any process $i \in M(G)$ and $j \in S(G)$, $v_i^{r + 1} \geq v_j^{r + 1} = v_j^{r}$. Let us consider the following three cases.\\
Case 1: when $i$ completes $line$ 9 and $j$ has not started $line$ 1. In this case, process $j$ would receive $<w_i^r, k>$ from at least one process at {\em line} 4, since any two majority of processes have at least one process in common. Then $j$ would be in $M(G)$ instead of $S(G)$, contradiction. \\
Case 2: when $j$ completes $line$ 2 and $i$ has not started $line$ 8. In this case, $i$ would receive $<v_j^r, k>$ from at least one process. Then, $v_i^{r + 1} \geq v_j^{r + 1}$.\\ 
Case 3: $i$ and $j$ are executing $line$ 1-2 and $line$ 8-9 concurrently. In this case, there exists a process $k$ which receives both $<w_i^r, k>$ and $<v_j^r, k>$. If $k$ receives $i$ first, then $j$ would receive $<w_i, k>$, contradiction. If $k$ receives $j$ first, then $i$ would receive $<v_j^r, k>$, which indicates  $v_i^{r + 1} \geq v_j^{r + 1}$. 
\end{proof}
Based on the above properties, we can have the following lemma. 

\begin{lemma}\label{lem:dec}
Let $G$ be a group of processes at round $r$ with label $k$. Then \\
(1) for each process $i \in G$, $k - \frac{f}{2^r} < h(v_i^r) \leq k + \frac{f}{2^r}$ \\
(2) $h(\sqcup\{v_i^r: i \in G\}) \leq k + \frac{f}{2^r}$
\begin{proof}
 By induction on round number $r$. When $r = 1$, label $k = n- \frac{f}{2}$, it is straightforward to have $n - f < h(v_i^r) \leq n$, since each process receives at least $n - f$ values and the height of input lattice is at most $n$. For the induction step, assume lemma \ref{lem:dec} holds for all groups at round $r - 1$. Consider an arbitrary group $G$ at round $r > 1$ with parameter $k$. Let $G'$ be the parent group of $G$ at round $r - 1$ with parameter $k'$. Consider the \textit{Classifier} procedure executed by all processes in $G'$ with parameter $k'$. By induction hypothesis, we have: \\  
 (1) for any process $i \in G'$, $k' - \frac{f}{2^{r - 1}} < h(v_i^{r - 1}) \leq k' + \frac{f}{2^{r - 1}}$ \\
 (2) $h(\sqcup\{v_i^{r - 1}: i \in G'\}) \leq k' + \frac{h}{2^{r - 1}}$.

 Let $L = k' - \frac{f}{2^{r - 1}}$ and $R = k' + \frac{f}{2^{r - 1}}$, then (1) and (2) are exactly the conditions of Lemma \ref{lem:cls}. Consider the following two cases: 

 Case 1: $G = M(G')$. Then $k = k' + \frac{f}{2^r}$. From (p1) and (p3) of Lemma \ref{lem:cls}, we have: \\
 (1) for any process $i \in G$, $k - \frac{f}{2^r} < h(v_i^r) \leq k + \frac{f}{2^r}$ \\
 (2) $h(\sqcup\{v_i^r: i \in G\}) \leq h(v_i^r) \leq k + \frac{f}{2^r}$. 
 
 Case 2: $G = S(G')$. Then $k = k' - \frac{f}{2^r}$. Similarly, from (p2) and (p4) of Lemma \ref{lem:cls}, we have the same equations. 
 \end{proof}
\end{lemma}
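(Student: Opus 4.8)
The plan is to prove the two claims of Lemma~\ref{lem:dec} together, by induction on the round number $r$, with Lemma~\ref{lem:cls} supplying the inductive step. The two invariants carried by a group at round $r$ --- a two-sided height bound on each member and a one-sided bound on the group join --- are exactly the hypotheses that Lemma~\ref{lem:cls} needs about the \textit{Classifier} call that group makes, and conclusions (p1)--(p4) of Lemma~\ref{lem:cls} are exactly those same two invariants for the master and slave child groups. So once the base case is set up, the only remaining work is to invoke Lemma~\ref{lem:cls} and to check that the arithmetic of the label updates is consistent.

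\textbf{Base case ($r=1$).} There is a single group with label $k = n-\tfrac f2$. By the Round~0 exchange each $p_i$ sets $v_i^1$ to the join of at least $n-f$ input values, so $h(v_i^1)\ge n-f$, while $h(v_i^1)\le n$ since the lattice spanned by the $n$ inputs has height at most $n$; with $k = n-\tfrac f2$ this is precisely $k-\tfrac f2 < h(v_i^1)\le k+\tfrac f2$, giving (1), and (2) follows because $\sqcup\{v_i^1: i\in G\}$ lies below the join of all $n$ inputs, of height $\le n = k+\tfrac f2$.

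\textbf{Inductive step.} Fix a group $G$ at round $r>1$ with label $k$, and let $G'$ be its parent group at round $r-1$ with label $k'$, so $G=M(G')$ or $G=S(G')$. The induction hypothesis applied to $G'$ says, for all $i\in G'$, $k'-\tfrac{f}{2^{r-1}} < h(v_i^{r-1}) \le k'+\tfrac{f}{2^{r-1}}$ and $h(\sqcup\{v_i^{r-1}: i\in G'\})\le k'+\tfrac{f}{2^{r-1}}$. With $L=k'-\tfrac{f}{2^{r-1}}$ and $R=k'+\tfrac{f}{2^{r-1}}$ (note $L\le k'\le R$), these are the hypotheses of Lemma~\ref{lem:cls} for the \textit{Classifier} that $G'$ runs at round $r-1$. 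If $G=M(G')$, the label update gives $k=k'+\tfrac{f}{2^r}$, hence $R=k'+\tfrac{f}{2^{r-1}}=k+\tfrac{f}{2^r}$ and $k'=k-\tfrac{f}{2^r}$, so (p1) and (p3) of Lemma~\ref{lem:cls} read off as (1) and (2). If $G=S(G')$, then $k=k'-\tfrac{f}{2^r}$, hence $L=k'-\tfrac{f}{2^{r-1}}=k-\tfrac{f}{2^r}$ and $k'=k+\tfrac{f}{2^r}$, so (p2) and (p4) read off as (1) and (2).

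\textbf{Main obstacle.} I do not expect a serious obstacle here, since Lemma~\ref{lem:cls} already carries the weight. The two points needing attention are the base-case height bookkeeping --- making the strict lower bound $h(v_i^1)>n-f$ rigorous requires being explicit about the lattice model (e.g.\ reasoning in $2^{[n]}$ with inputs in general position) --- and verifying the algebraic identity that the root threshold $n-\tfrac f2$ together with the halving updates $\pm\tfrac{f}{2^{r+1}}$ forces $R$ to collapse to $k+\tfrac{f}{2^r}$ on the master side and $L$ to collapse to $k-\tfrac{f}{2^r}$ on the slave side. That identity is exactly what makes the induction close and the classifier tree have depth $\log f$.
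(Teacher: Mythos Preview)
Your proposal is correct and follows essentially the same argument as the paper: induction on $r$, the Round~0 exchange for the base case, and invoking Lemma~\ref{lem:cls} on the parent group $G'$ with $L=k'-f/2^{r-1}$, $R=k'+f/2^{r-1}$ for the inductive step, splitting into $G=M(G')$ and $G=S(G')$. If anything, you are more careful than the paper --- you explicitly verify the arithmetic that collapses $R$ and $L$ to $k\pm f/2^r$ under the label updates, and you flag the strict lower bound $h(v_i^1)>n-f$ in the base case, which the paper asserts without comment.
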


From Lemma \ref{lem:dec}, we directly have the following lemma. 
\begin{lemma}\label{lem:term}
Let $i$ and $j$ be two processes that are within the same group $G$ at the end of round $r = \log f + 1$. Then $v_i^{r + 1}$ and $v_j^{r + 1}$ are equal.
\begin{proof}
Let $G'$ be the parent of $G$ with parameter $k'$. Assume without loss of generality that $G = M(G')$. The proof for the case $G = S(G')$ follows in the same manner. Since $G'$ is a group at round $\log f$, by Lemma \ref{lem:dec}, we have: \\
 (1) for each process $p \in G'$, $k' - 1 < h(v_p^{\log f} ) \leq k' + 1$, and\\
 (2) $h(\sqcup\{v_p^{\log f}: p \in G'\}) \leq k' + 1$

 Since $i \in G'$ and $j \in G'$, (1) and (2) hold for both process $i$ and $j$. By the assumption that $G = M(G')$, at round $\log f$,  process $i$ and $j$ execute the \textit{Classifier} procedure with parameter $k'$ in group $G'$ and be classified as \textit{master} and proceed to group $G = M(G')$. Let $L = k' - 1$ and $R = k' + 1$, then by applying Lemma \ref{lem:cls}($p1$) we have $k' < h(v_i^{\log f + 1}) \leq k' + 1$ and $k' < h(v_j^{\log f + 1}) \leq k' + 1$, thus $h(v_i^{\log f + 1}) = h(v_j^{\log f + 1}) = k' + 1$. Similarly, by Lemma \ref{lem:cls}($p3$), we have $h(\sqcup\{v_i^{\log f + 1}, v_j^{\log f + 1}\}) = k' + 1$. Thus, $v_i^{\log f + 1} = v_j^{\log f + 1}$. Therefore, $v_i^{r}$ and $v_j^{r}$ are equal at the beginning of round $r = \log f + 1$.
\end{proof}
\end{lemma}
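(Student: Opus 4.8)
The plan is to reduce the whole claim to a single application of Lemma~\ref{lem:cls} at the very last round, exploiting the fact that at round $r = \log f$ the slack term $\frac{f}{2^r}$ has collapsed to $1$, so the height windows supplied by Lemma~\ref{lem:dec} become tight enough to pin the heights down exactly.

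First I would let $G'$ be the parent of $G$ at round $\log f$, say with label $k'$, and observe that $G$ is either $M(G')$ or $S(G')$; by symmetry it suffices to treat $G = M(G')$. Since $i$ and $j$ end up in the same group $G$, they both lie in $G'$ at round $\log f$ and both run the \textit{Classifier} invocation associated with $G'$ and get classified as \textit{master}. Applying Lemma~\ref{lem:dec} to $G'$ at round $\log f$ and using $\frac{f}{2^{\log f}} = 1$ gives, for every $p \in G'$, $k' - 1 < h(v_p^{\log f}) \le k' + 1$, together with $h(\sqcup\{v_p^{\log f} : p \in G'\}) \le k' + 1$. These are precisely the hypotheses of Lemma~\ref{lem:cls} with $L = k' - 1$, $R = k' + 1$, and threshold $k = k'$ (note $L \le k \le R$ holds).

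Next I would extract two consequences for $i, j \in M(G')$. From (p1) of Lemma~\ref{lem:cls}, $k' < h(v_i^{\log f + 1}) \le k' + 1$ and likewise for $j$; since heights are integers, this forces $h(v_i^{\log f + 1}) = h(v_j^{\log f + 1}) = k' + 1$. From (p3), $h(\sqcup\{v_p^{\log f + 1} : p \in M(G')\}) \le k' + 1$, hence in particular $h(v_i^{\log f + 1} \sqcup v_j^{\log f + 1}) \le k' + 1$. I would then close with the elementary fact that height is strictly monotone along the order: if $a \le b$ and $h(a) = h(b)$ then $a = b$, because a strict inequality $a < b$ would extend a longest chain ending at $a$ and force $h(b) > h(a)$. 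Since $v_i^{\log f + 1} \le v_i^{\log f + 1} \sqcup v_j^{\log f + 1}$ and both sides have height $k' + 1$, they coincide; symmetrically $v_j^{\log f + 1}$ equals the same join, so $v_i^{\log f + 1} = v_j^{\log f + 1}$, which is the asserted equality of the round-$(\log f + 1)$ (i.e.\ decision) values.

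I expect the only real obstacle to be the bookkeeping around the final window: one must check that $(k'-1, k'+1]$ contains exactly the two integers $k'$ and $k'+1$, so that the strict lower bound $h > k'$ from (p1) genuinely collapses it to the single value $k'+1$ — and that this integrality is what the ``$\frac{f}{2^r}=1$'' specialization buys us. Everything else is a direct citation of Lemma~\ref{lem:dec} and Lemma~\ref{lem:cls} plus strict monotonicity of height, with the $S(G')$ case handled identically via (p2) and (p4).
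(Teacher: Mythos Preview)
Your proposal is correct and follows essentially the same route as the paper: take the parent group $G'$ at round $\log f$, apply Lemma~\ref{lem:dec} with $\frac{f}{2^{\log f}}=1$ to get the tight window, feed that into Lemma~\ref{lem:cls} with $L=k'-1$, $R=k'+1$, and read off (p1) and (p3) (resp.\ (p2) and (p4) for the slave case) to pin the heights and conclude equality. Your write-up is in fact slightly more careful than the paper's, since you make explicit both the integrality step that collapses $(k',k'+1]$ to $\{k'+1\}$ and the strict-monotonicity-of-height argument that turns $h(a)=h(a\sqcup b)$ into $a=a\sqcup b$; the paper leaves both implicit.
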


\begin{lemma}\label{lem:dom}
Let process $i$ decides on $y_i$. Let $G$ be a group at round $r$ such that $i \in S(G)$, then $y_i \leq \sqcup \{v_i^{r + 1}: i \in S(G)\}$. 
\begin{proof}
Immediate from $p2$ and $p4$ of Lemma \ref{lem:cls}. 
\end{proof}
\end{lemma}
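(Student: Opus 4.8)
The plan is to unwind the decision value and then propagate the obvious round-$(r+1)$ bound all the way down the slave subtree. Write $y_i = v_i^{\log f + 1}$ and put $z := \sqcup\{v_j^{r+1} : j \in S(G)\}$, the right-hand side of the claim. Since $v_i^{r+1}$ is one of the terms of this join, $v_i^{r+1} \le z$ is immediate; what is \emph{not} immediate is $y_i \le z$, because a process that is slave in $G$ at round $r$ may be re-classified as master at a later round and see its value grow. So the first step is to set up the right invariant: for every round $s$ with $r+1 \le s \le \log f + 1$ and every process $p$ lying in $S(G)$ or in a descendant group of $S(G)$ at round $s$, one has $v_p^s \le z$. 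Instantiating this with $s = \log f + 1$ and $p = i$ gives the lemma.

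I would prove the invariant by induction on $s$. The base case $s = r+1$ is $v_p^{r+1} \le z$ for all $p \in S(G)$, which is the definition of $z$; here Lemma~\ref{lem:cls}(p4) is exactly the accompanying height bound $h(z) \le k$ and Lemma~\ref{lem:cls}(p2) bounds the heights of the individual $v_p^{r+1}$. For the inductive step, inspect the \textit{Classifier} call that $p$ runs at round $s$ with its current label $k_s$: by lines 5--14, $v_p^{s+1}$ is either $v_p^s$ itself (if $p$ is re-classified slave) or a join of $\langle \text{value},\text{label}\rangle$ pairs carried in \textit{read\_ack}/\textit{write\_ack} messages whose label is \emph{exactly} $k_s$ --- hence of values produced during round $s$ by processes that share $p$'s group at that round. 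Because distinct root-to-node paths of the classifier tree produce distinct labels --- a label has the form $n - f/2 + \sum_j \epsilon_j f/2^{j+1}$ with $\epsilon_j \in \{+1,-1\}$, and the sign vector (hence the node) is recovered from the sum since $f/2^{j+1}$ dominates the tail $\sum_{\ell > j} f/2^{\ell+1}$ --- sharing label $k_s$ at round $s$ forces those processes to have agreed with $p$ on every classification in rounds $1,\dots,s-1$; in particular all of them were slaves in $G$ at round $r$, so all lie in $S(G)$ or a descendant at round $s$. By the induction hypothesis their round-$s$ values are $\le z$, so $v_p^{s+1}$, being a join of these, is $\le z$.

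The step I expect to be the real obstacle is the asynchronous bookkeeping behind ``every label-$k_s$ value that reaches $p$ during round $s$ is $\le z$.'' Since acks carry an entire \textit{acceptVal} set, one has to rule out that a slow or out-of-phase process slips into $p$'s round-$s$ read/write a pair tagged $k_s$ that is not actually a round-$s$ quantity of a current same-group process, and one also has to handle the intra-round recursion that the value $w$ written at line 8 by a same-group process is itself a join of earlier label-$k_s$ values. Making this precise amounts to a bookkeeping lemma --- proved by a well-founded induction on write events and by the quorum-intersection arguments Lemma~\ref{lem:cls} already uses for (p4) and (p5) --- pinning down exactly which $\langle \text{value},\text{label}\rangle$ pairs can sit in any acceptor's \textit{acceptVal} when it replies during round $s$. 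With that lemma in hand the induction above closes, and Lemma~\ref{lem:dom} becomes the short consequence of (p2) and (p4) that the authors indicate; I would prove the bookkeeping lemma first.
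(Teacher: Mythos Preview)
Your proposal is correct and considerably more careful than the paper's own argument, which is the single line ``Immediate from $p2$ and $p4$ of Lemma~\ref{lem:cls}.'' You are right that (p2) and (p4), taken at face value, are only height bounds and do not by themselves yield the value inequality $y_i \le z$: a process that is slave in $G$ may be classified master at some later round and see its value grow beyond $v_i^{r+1}$. The label-based induction you set up --- propagating $v_p^{s} \le z$ down the subtree rooted at $S(G)$, using the fact that lines~5 and~10 of the \textit{Classifier} filter by label so that $p$ only ever joins values originating in its own group --- is the natural way to close this gap and is almost certainly what the authors intend but suppress. Your flagged bookkeeping obstacle is real but routine once you observe that labels are globally unique across \emph{all} rounds and groups, not just within a fixed round: writing the round-$s$ label as $n - f/2 + \sum_{j=1}^{s-1} \epsilon_j f/2^{j+1}$, the sum multiplied by $2^{s}/f$ is an odd integer for $s \ge 2$ (and zero for $s=1$), so distinct tree nodes never share a label. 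Hence any $\langle v, k_s\rangle$ pair that reaches $p$ --- regardless of which round the replying process is in or what else sits in its \textit{acceptVal} --- was written at line~1 or line~8 by some process in $p$'s round-$s$ group; the line-8 values unwind to joins of line-1 values by the short induction on write events you describe, and the line-1 values are exactly the $v_q^{s}$ for $q$ in that group, all $\le z$ by the outer induction hypothesis.
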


\begin{lemma}\label{lem:comp}
Let $i$ and $j$ be any two processes in two different groups $G_i$ and $G_j$ at the end of round $\log f + 1$, then $y_i$ is comparable with $y_j$.
\begin{proof}
Since $G_i \not = G_j$, there must exist a group which contains both $i$ and $j$. Let $G$ be such a group with biggest round number $r$. Without loss of generality, assume $i \in S(G)$ and $j \in M(G)$. From Lemma \ref{lem:cls}($p5$), we have $v_j^(r + 1) \geq \sqcup\{v_i^{r + 1}: i \in S(G)\}$. From Lemma \ref{lem:dom}, we have $y_i \leq \sqcup \{v_i^{r + 1}: i \in S(G)\} \leq v_j^{r + 1}$. Note that the value held by any process is non-decreasing. Thus, $y_j \geq y_i$. Therefore, we have $y_i$ is comparable with $y_j$.
\end{proof}
\end{lemma}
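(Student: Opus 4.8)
The plan is to leverage the binary-tree structure that \emph{AsyncLA} imposes on the Classifier invocations. Each process traces a root-to-leaf path through this tree, branching right into the master subgroup or left into the slave subgroup at every node according to its classification. Since $i$ and $j$ finish in distinct groups $G_i \neq G_j$, their paths must coincide on some initial segment and then diverge. First I would identify the group $G$, at some round $r$, that contains both $i$ and $j$ and has the largest such round number --- informally, the last level at which the two processes are still together, i.e. their least common ancestor in the tree. At round $r$ one of them is classified as master and the other as slave, since otherwise they would still share a common group at round $r+1$, contradicting maximality of $r$. Without loss of generality I would take $i \in S(G)$ and $j \in M(G)$.

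With $G$ in hand, comparability follows by chaining the two earlier lemmas. By Lemma \ref{lem:cls}(p5), every master value dominates the join of all slave values at the next round, so in particular $v_j^{r+1} \geq \sqcup\{v_i^{r+1} : i \in S(G)\}$. By Lemma \ref{lem:dom}, the slave process $i$ satisfies $y_i \leq \sqcup\{v_i^{r+1} : i \in S(G)\}$. Composing these two inequalities gives $y_i \leq v_j^{r+1}$. Finally, since the value each process holds is non-decreasing across rounds and $y_j$ is $j$'s final value, we have $y_j \geq v_j^{r+1} \geq y_i$, which establishes $y_i \leq y_j$ and hence comparability.

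The main obstacle I anticipate is making the least-common-ancestor argument rigorous rather than merely intuitive. I would need to justify that two processes lie in the same group at round $r$ precisely when their first $r-1$ classifications agree, which follows because the label $l_i$ is updated deterministically by $\pm f/2^{r+1}$ at each step and therefore uniquely records the sequence of master/slave choices. From this it follows that the set of rounds at which $i$ and $j$ are co-grouped is a nonempty prefix of rounds (round $1$ being shared by all processes), so a maximal such round $r$ exists and the two processes necessarily split at it. Everything after this structural step is a short, purely order-theoretic composition of the already-proved property (p5) and Lemma \ref{lem:dom}, together with the monotonicity of each process's held value.
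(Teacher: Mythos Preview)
Your proposal is correct and follows essentially the same route as the paper: identify the last common group $G$ (the paper phrases this as ``a group which contains both $i$ and $j$ with biggest round number''), split into $i\in S(G)$, $j\in M(G)$, then chain Lemma~\ref{lem:cls}(p5) with Lemma~\ref{lem:dom} and the monotonicity of held values to get $y_i \leq v_j^{r+1} \leq y_j$. Your added discussion of why the least-common-ancestor exists and why the split into master/slave is forced is more explicit than the paper's version, but the underlying argument is the same.
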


Now, we have the main theorem. 
\begin{theorem}\label{theo:comp}
Algorithm {\em AsyncLA} solves the lattice agreement problem in $O(\log f)$ round-trips when a majority of processes is correct.
\begin{proof} 
\textit{Down-Validity} holds since the value held by each process is non-decreasing. \textit{Upward-Validity} follows because each learned value must be the join of a subset of all initial values which is at most $\sqcup \{x_1,...,x_n\}$. For \textit{Comparability}, from Lemma \ref{lem:term}, we know that any two processes which are in the same group at the end of {\em AsyncLA}, they must have equals values. For any two processes which are in two different groups, from Lemma \ref{lem:comp} we know they must have comparable values.
\end{proof}
\end{theorem}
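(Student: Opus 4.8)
The plan is to verify the three defining properties of lattice agreement — Downward-Validity, Upward-Validity, and Comparability — together with the $O(\log f)$ round-trip bound, drawing on the structural lemmas already established. First I would dispense with termination and complexity: since a majority of processes is correct there are at least $n - f$ correct processes, so every instruction that waits for $n-f$ acknowledgments (Round 0, and lines 2, 4, and 9 of \textit{Classifier}) eventually completes. A single invocation of \textit{Classifier} therefore costs at most three round-trips (the initial \textit{write}, the \textit{read}, and the conditional second \textit{write}), and \textit{AsyncLA} performs one Round 0 plus $\log f$ classifier rounds, giving $3\log f + O(1) = O(\log f)$ round-trips.

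For the two validity properties I would argue by tracking how a process's value evolves. \textit{Downward-Validity} follows from monotonicity: $v_i^1 = \sqcup U \geq x_i$ in Round 0 since $x_i \in U$, and in every subsequent round \textit{Classifier} returns either its own input $v$ unchanged (the \textit{slave} branch, line 14) or the join $w'$ of $v$ with received values (the \textit{master} branch, line 11), both of which are $\geq$ the value carried into the round; hence $x_i \le v_i^1 \le \cdots \le v_i^{\log f+1} = y_i$. \textit{Upward-Validity} follows because every value a process ever holds or sends is, by a straightforward induction over rounds and over the message exchanges inside \textit{Classifier}, a join of some subset of the initial inputs $\{x_1,\dots,x_n\}$, and joins are monotone, so $y_i \le \sqcup\{x_1,\dots,x_n\}$.

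The crux is \textit{Comparability}, and here I would split on whether two processes $i$ and $j$ finish in the same group or in different groups at the end of round $\log f + 1$. Because \textit{AsyncLA} routes every process down a single binary tree of \textit{Classifier} nodes, any two processes share a common ancestor group, so exactly one of these two cases applies. If $i$ and $j$ end in the same group, Lemma~\ref{lem:term} gives $v_i^{\log f+1} = v_j^{\log f+1}$, hence $y_i = y_j$. If they end in different groups, Lemma~\ref{lem:comp} directly yields that $y_i$ and $y_j$ are comparable. The main obstacle is not in this assembly step — it is entirely carried by the quorum-intersection reasoning already done in Lemma~\ref{lem:cls} (especially (p5)) and its consequences Lemma~\ref{lem:dec}, \ref{lem:term}, and \ref{lem:dom}; what remains to check carefully is only that the case split is exhaustive and that the height bookkeeping (the increments $f/2^{r+1}$ and the root label $n - f/2$) makes the tree have depth exactly $\log f$, so that Lemma~\ref{lem:term} genuinely applies at the leaves.
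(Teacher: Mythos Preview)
Your proposal is correct and follows essentially the same line as the paper's proof: Downward-Validity via monotonicity of the value sequence, Upward-Validity via the observation that every value is a join of a subset of the inputs, and Comparability via the same-group/different-group case split invoking Lemma~\ref{lem:term} and Lemma~\ref{lem:comp}. You supply somewhat more detail than the paper (an explicit termination argument from the majority-correct assumption, and the $3\log f + O(1)$ round-trip count, which the paper places in a separate complexity subsection), but the structure and the key lemmas invoked are identical.
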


\subsection{Complexity Analysis} 
Each invocation of the \textit{Classifier} procedure takes at most three round-trips. $\log f$ invocation of \textit{Classfier} results in at most $3 * \log f$ round-trips. Thus, the total time complexity is $3* \log f + 1$ round-trips. For message complexity, each process sends out at most 3 \textit{write} and \textit{read} messages and at most $3 * n$ \textit{write\_ack} and \textit{read\_ack} messages. Therefore, the message complexity for each process is $O(n * \log f)$. 

\section{Improved Generalized Lattice Agreement Protocol for RSM}
In this section, we give optimizations for the generalized lattice agreement protocol proposed in \cite{b10} to implement a linearizable RSM. The optimized protocol, $GLA_\Delta$, is shown in Fig \ref{fig:gla_alpha} with the two main changes marked using $\Delta$. Although we only have two primary changes compared to the original algorithm in \cite{b10}, we claim those changes are the key for its applicability in building a linearizable RSM. 

The basic idea of $GLA_\Delta$ is the same as the original algorithm in \cite{b10}. Each process invokes the {\bf Agree()} procedure, which is primarily composed of an execution of a lattice agreement instance to learn new commands. 
The {\bf Agree()} is automatically executed when the guard condition is satisfied. Inside the {\bf Agree()} procedure, a process first updates its {\em acceptVal} to be the join of current {\em acceptVal} and {\em buffVal}. Then, it starts a lattice agreement instance with next available sequence number. At each round of the lattice agreement, the process sends its current $acceptVal$ to all processes and waits for $n - f$ $ACK$s. If it receives any $decide$ $ACK$, it decides on the join of all $decide$ values. If it receives a majority of $accept$ $ACKs$, it decides on its current value. Otherwise, it updates its $acceptVal$ to be the join of all received values and starts next round. When a process receives a proposal from some other process, if the proposal is associated with a smaller sequence number, then it sends $decide$ $ACKs$ back with its decided value for that sequence number and includes the received value into its own buffer set. Otherwise, it waits until its current sequence number to be equals to the sequence number associated with the proposal. Then, it checks whether the proposed value contains its current $acceptVal$. If true, the process sends back a $accept$ $ACK$. Otherwise, it sends back a $reject$ $ACK$ along with its current $acceptVal$. When a process completes a lattice agreement instance for sequence number $s$, it stores decided values into $LV[s]$. Then it removes all learned values for sequence number $s - 1$. 

\begin{figure}[htbp]\begin{centering}
\fbox{\begin{minipage} {0.46\textwidth}
\underline{$\bm{GLA_\Delta}$ for $p_i$}\\
\h \textit{s} := 0 // sequence number\\
\h \textit{maxSeq} := -1 // largest sequence number seen\\
\h \textit{buffVal} := $\bot$ // commands buffer\\
\h \textit{LV} := $\bot$ // map from seq to learned commands set\\
\h \textit{acceptVal} := $\bot$ // current accepted commands set \\
\h \textit{active} := \textit{false} //proposing status\\

\textbf{Procedure Agree():}\\
  {\bf guard:} (\textit{active} = $false$) $\wedge$ (\textit{buffVal} $\neq \bot$ $\vee ~maxSeq \geq s$) \\
  {\bf effect:} \\
\h \textit{active} := $true$\\
\h \textit{acceptVal} := \textit{buffVal} $\sqcup$ \textit{acceptVal} \\
\h \textit{buffVal} := $\bot$\\
\h\\
\h /* Lattice Agreement with sequence number $s$ */\\
\h {\bf for $r := 1$ to $f + 1$} \\
\h\h \textit{val} := \textit{acceptVal}\\
\h\h  Send \textit{prop}(\textit{val}, $r, s$) to all\\
\h\h  {\bf wait for} $n - f$ \textit{ACK}($-, -, r, s$)\\
\h\h let $V$ be values in \textit{reject ACKs} \\
\h\h let $D$ be values in \textit{decide ACKs} \\
\h\h let \textit{tally} be number of \textit{accept ACKs} \\
\h\h {\bf if} {$|D| > 0$}\\
\h\h\h $val := \sqcup \{d ~| ~d \in D\}$ \\
\h\h\h {\bf break}\\
\h\h {\bf else if} {\textit{tally} $> \frac{n}{2}$} \\
\h\h\h {\bf break}\\
\h\h {\bf else}\\
\h\h \h Let $tmp$ := $\sqcup \{v ~| ~v \in V\}$ \\
\h\h\h \textit{acceptVal} := \textit{acceptVal} $\sqcup ~tmp$ \\
\h {\bf end for}\\
\\
\h \textit{LV}$[s]$ := \textit{val} \\
\h  \textit{acceptVal} := \textit{acceptVal} - $LV[s-1]$ \hspace{0.5in} $\Delta_1$ \\
\h \textit{s} := \textit{s} + 1\\
\h \textit{active} := \textit{false}\\

\textbf{on receiving} \textit{ReceiveValue}($v$):\\
\h \textit{buffVal} := \textit{buffVal} $ \sqcup ~v$\\

{\bf on receiving} \textit{prop}$(v_j, r, s')$ from $p_j$:\\
\h {\bf if} {$s' < s$}\\
 \h \h \textit{buffVal} := \textit{buffVal} $ \sqcup ~v_j$ \hspace{1in} $\Delta_2$\\
\h\h Send \textit{ACK}(\textit{\quotes{decide}, LV}[$s'$], $r, s'$)\\
\h\h {\bf return}\\
\h  \textit{maxSeq} := $\max \{s', \textit{maxSeq}\}$\\
\h {\bf wait until} $s' = s$ \\
\h {\bf if} {\textit{acceptVal} $\subseteq v_j$}\\
\h\h Send \textit{ACK}(\textit{\quotes{accept},} $-, r, s'$) \\
\h\h \textit{acceptVal} $:= v_j$ \\
\h {\bf else}  \\
\h\h Send \textit{ACK}(\textit{\quotes{reject}}, \textit{acceptVal}, $r, s'$)\\
\end{minipage}
}
\end{centering}
\caption{Algorithm $GLA_\Delta$ \label{fig:gla_alpha}}
\vspace*{-0.2in}  
\end{figure}

\subsection{Truncate the Accept and Learned Command Set}
Let us first look at the challenges of directly applying the generalized lattice agreement protocol in \cite{b10} or the one in \cite{b6} to implement a linearizable RSM. In a replicated state machine, each input value is a command from a client. Thus, the input lattice is a finite boolean lattice formed by the set of all possible commands. The order in this lattice is defined by set inclusion, and the join is defined as the union of two sets. This boolean input lattice poses a challenge for both the algorithms in \cite{b6} and \cite{b10}. In these algorithms, for each process (each acceptor process in \cite{b6}) there is an accept value set, which stores the join of whatever value the process has accepted. Now since the join is defined as union in the RSM setting, this set keeps increasing. For example, in Fig. \ref{fig:accProblem}, $p_1$, $p_2$ and $p_3$ first receive commands $\{a\}$, $\{b\}$ and $\{c\}$, respectively. They start the lattice agreement instance with sequence number $0$ and learn $\{a\}$, $\{a, b\}$ and $\{a, b, c\}$ respectively for sequence number 0. After that, $p_1$, $p_2$ and $p_3$ receive $\{d\}$, $\{e\}$, and $\{f\}$ as input, respectively. Now, they start a lattice agreement instance with the sequence number 1. In order to ensure comparability and stability of generalized lattice agreement, the learned command set and accept command set for sequence number 1 have to include the largest learned value of sequence 0, which is $\{a, b, c\}$, although each process only proposes a single command. Therefore, the accept and learned value set keeps increasing. This problem makes applying lattice agreement to implement a linearizable RSM impractical. 

\begin{figure}[htbp]
\centerline{\includegraphics[width=3in]{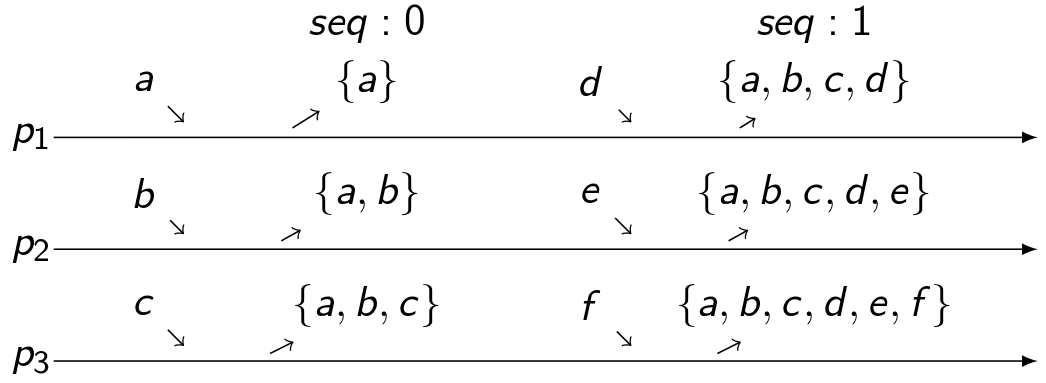}}
\caption{The Accept and Learned Value Set Keeps Increasing\label{fig:accProblem}}
\end{figure}

To tackle the always growing accept command set problem, we would like to have some way to truncate this set. A naive way is to remove all learned commands in the accept command set when proposing for the next available sequence number. This way does not work. Suppose we have two processes: $p_1$, $p_2$ and $p_3$. They propose $\{a\}$, $\{b\}$ and $\{c\}$, respectively for sequence number 0. After execution of lattice agreement for sequence number 0, suppose $p_1$, $p_2$ and $p_3$ both have learned value set and accept value set to be $\{a\}$, $\{a, b, c\}$, and $\{a, b, c\}$, respectively. It is easy to verify this case is possible for an execution of lattice agreement. When completing sequence number 0, all processes remove learned value set for sequence number 0 from their accept value set. Thus, the accept value set of all the three processes becomes to be empty. Now, if $p_1$, $p_2$ and $p_3$ start to propose for sequence number 1 with new commands $\{d\}$, $\{e\}$ and $\{f\}$. Since the accept command sets of $p_2$ and $p_3$ do not contain value $\{b\}$ and $\{c\}$, $p_1$ will never be able to learn $\{b\}$ and $\{c\}$. Thus, learned command set of $p_1$ for sequence 1 and the learned command set of $p_2$ and $p_3$ for sequence 0 are incomparable. Thus, we cannot remove all learned value set from the accept value set. Instead of removing all learned commands from the accept command set, we propose to remove all learned commands for the sequence numbers smaller than the largest learned sequence number from the accepted command set. In order to achieve this, the line marked by $\Delta_1$ in the pseudocode is added, compared to the original algorithm in \cite{b10}. In this line, after a process has learned a value set for sequence number $s$, it removes the learned value set corresponding to sequence number $s-1$ from its accept value set.

Second, as the state machine keeps running, the mapping of sequence number to learned commands, $LV$,  also keeps growing. Thus, we propose the following technique to truncate this map. Let each process record the largest sequence number for which all replicas have started proposing, denoted as $min\_seq$. Thus, all replicas have learned commands for any sequence number smaller than $min\_seq$, since each replica has to learn commands for each sequence. Besides, each replica also record the largest sequence number for which the corresponding learned values have been applied into state (executed), denote as $executed\_seq$. Then, each replica removes all learned commands in $LV$ with sequence number smaller than min of $min\_seq$ and $executed\_seq$. In this way, the learned commands map can be kept small. Since this improvement is trivial, we do not include it in the algorithm pseudocode.  

\subsection{Remove Forwarding}
In both the algorithms of \cite{b6} and \cite{b10}, a process has to forward all commands it receives to all other processes or proposers to ensure liveness. This forwarding results in load that is multiplied many fold, since many processes may propose the same request. We claim that this blind forwarding is a waste. In \cite{b10}, this forwarding is to ensure that the commands proposed by slow processes can also be learned. However, for the fast processes, there is no need to forward their requests to others because they can learn requests quickly. Therefore, instead of forwarding every request to all servers, we require that when a process receives some proposal with smaller sequence number than its current sequence number, it sends back a $decide$ message and also include the received proposal value into its own buffer set. These values will be proposed by the server in its next sequence number. In this way, only when a process is slow, its value will be proposed by the fast processes. This change is shown as addition of the line marked by $\Delta_2$ in the algorithm.

\subsection{Proof of Correctness} 
In this section, we prove the correctness of algorithm $GLA_\Delta$. Although we only have two primary changes compared to the algorithm in \cite{b10}, the correctness proof is quite different. Let $LearnedVal_s^p$ denotes the learned value of process $p$ after completing lattice agreement for sequence number $s$. Thus, $LearnedVal_s^i = \sqcup\{LV[t]: t \in [0...s]\}$. Let $accept_s^p$ denotes the value of $acceptVal$ of process $p$ at the end of sequence number $s$. 

The following lemma follows immediately from the {\em Comparability} requirement of the lattice agreement problem. 

\begin{lemma}\label{lem:comp_same_sequence}
For any sequence number $s$, \textit{$LV_p[s]$} is comparable with \textit{$LV_q[s]$} for any two processes $p$ and $q$.
\end{lemma}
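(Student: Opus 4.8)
The plan is to argue that, for a fixed sequence number $s$, the part of the execution of $GLA_\Delta$ that concerns $s$ is itself an execution of a single-shot lattice agreement protocol, and then to invoke the \textbf{Comparability} guarantee of that protocol. Concretely, I would isolate all $prop(\cdot,\cdot,s)$ messages, all the $ACK(\cdot,\cdot,\cdot,s)$ messages they trigger, and the $\mathbf{for}$-loop of the $\mathbf{Agree}()$ call that each process runs while its local counter equals $s$, together with the value of $acceptVal$ at that time. Every correct process eventually acts as an ``acceptor'' for $s$: a process whose counter is still below $s$ blocks at $\mathbf{wait\ until}\ s'=s$ and answers $accept$/$reject$ once it reaches $s$; a process whose counter already exceeds $s$ has, by the code, necessarily already executed $LV[s] := val$ and answers $ACK(\mathrm{decide}, LV[s], \cdot, s)$. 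Hence the value stored in $LV_p[s]$ is exactly the decision of $p$ in this sub-execution (it is set either to the join of the received $decide$ values, or to the value carried by a majority of $accept$ ACKs). Since the sub-execution is an instance of the lattice agreement protocol of \cite{b10}, whose decided values satisfy \textbf{Comparability}, $LV_p[s]$ and $LV_q[s]$ are comparable for all $p,q$.

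The second step is to check that the two modifications $\Delta_1$ and $\Delta_2$ do not disturb this reduction. The line $\Delta_1$ touches $acceptVal$ only \emph{after} $LV[s]$ has been assigned, i.e.\ strictly after the $s$-th instance has ended, so it changes nothing in that instance; it only alters the initial value fed into the $(s{+}1)$-st instance, and \textbf{Comparability} of lattice agreement holds for arbitrary initial proposals. The line $\Delta_2$ only adds a received proposal value to $buffVal$, which is consumed by a \emph{later} $\mathbf{Agree}()$ call, so it does not change which $ACK$ is sent in response to a $prop(\cdot,\cdot,s)$ message, and again the $s$-th instance is unaffected. This is the step I expect to require the most care: one must make sure that the $decide$-ACK shortcut — a process ahead of $s$ answering on behalf of the whole $s$-th instance — together with the truncation and forwarding changes, really leaves the $s$-th sub-execution step-for-step identical to an admissible run of the base protocol, rather than merely ``morally'' so.

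As a self-contained alternative, I would instead prove the statement directly by induction on the real-time order in which processes assign $LV[s]$. The first process to do so cannot have received any $decide$ ACK for $s$ (none exists yet), so it decides via a majority of $accept$ ACKs; for any two processes deciding this way one notes that within the $s$-th instance $acceptVal$ is non-decreasing and that any two $(n-f)$-quorums intersect, so a common quorum member witnesses that the two accepted values lie on a chain. For a process that decides $LV[s]$ as $\sqcup D$ with $D$ a set of previously decided $s$-values, each element of $D$ is, by the induction hypothesis, comparable with the fixed value being compared against; since a value comparable with every member of a set is comparable with that set's join, the property propagates. The obstacle in this route is the same quorum-monotonicity bookkeeping for the base case, plus the subtlety that two processes may obtain their accepting majorities in the \emph{same} round $r$, where the order in which a shared quorum member answers the two proposals decides the direction of the inequality.
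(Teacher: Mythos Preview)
Your first approach is exactly what the paper does, only you supply far more detail: the paper's entire proof is the single sentence ``follows immediately from the \emph{Comparability} requirement of the lattice agreement problem,'' i.e.\ it simply observes that the $s$-indexed sub-execution is an instance of the base lattice agreement protocol and invokes its Comparability guarantee, without explicitly verifying that $\Delta_1$ and $\Delta_2$ leave that instance intact. Your careful check that $\Delta_1$ fires only after $LV[s]$ is assigned and that $\Delta_2$ only touches $buffVal$ for a later call is a genuine addition over the paper's one-liner.

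Your second, self-contained route (induction on the real-time order of $LV[s]$ assignments, handling the majority-accept base case via quorum intersection and the $decide$-ACK case via the join of a chain) does not appear in the paper at all. It trades the black-box appeal to \cite{b10} for a direct argument; the benefit is independence from the correctness proof of the base protocol, the cost is re-doing the quorum bookkeeping the base protocol already encapsulates. Either route is fine here; the paper simply takes the shortest one.
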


The following lemma shows {\em Stability}. 
\begin{lemma}\label{lem:stability}
For any sequence number $s$, $LearnedVal_p^s$ $\subseteq$ $LearnedVal_q^{s + 1}$ for any two correct processes $p$ and $q$.
\begin{proof}
Proof by induction on sequence number $s$. 

The base case, $s = 0$. When $p$ completes sequence number 0, $LV_p[0]$ must be accepted by a majority of processes. That is, there exists a majority of processes which include  $LV_p[0]$ into their accept command set, i.e, into $acceptVal$. During the $q's$ execution of lattice agreement 1, it must learn $LV_p[0]$ because any two majority of processes have at least one common process. Thus, $LV_p[0] \subseteq LV_q[1]$. So, we have \textit{$LearnedVal_p^0$} $\subseteq$ \textit{$LearnedVal_q^{1}$}. 

The induction case. Assume that for sequence number $s$, we have \textit{$LearnedVal_p^s$} $\subseteq$ \textit{$LearnedVal_q^{s + 1}$} for any two processes $p$ and $q$. We need to show that \textit{$LearnedVal_p^{s+1}$} $\subseteq$ \textit{$LearnedVal_q^{s + 2}$}. Equivalently, we show that $LearnedVal_p^{s}  \cup LV_p[s + 1]$  $\subseteq$ $LearnedVal_q^{s + 1} \cup LV_q[s + 2]$. Thus, we only need to show that  $LV_p[s + 1]$  $\subseteq$ $LearnedVal_q^{s + 1} \cup LV_q[s + 2]$, since we have $LearnedVal_p^{s} \subset LearnedVal_q^{s + 1}$ by assumption. Consider any $v \in LV_p[s + 1]$. During $p's$ execution of lattice agreement for sequence number $s + 1$, $v$ must be included into $acceptVal$ by a majority of processes. Let $Q$ denotes such a majority of processes. Due to the change marked by $\Delta_1$, there could exist some process $j \in Q$ such that $v \not \in acceptVal_j^{s + 1}$. In this case, we must have $v \in LV_j[s] \subseteq LearnedVal_j^{s} \subseteq LearnedVal_q^{s + 1}$. In the other case, if $\forall j \in Q$, we have $v \in acceptVal_j^{s + 1}$. Then during $q's$ execution of lattice agreement for sequence number $s + 2$, $q$ must learn $v$ since $v$ is contained in the $acceptVal$ of a majority of processes. Thus, $v \in LV_q[s + 2]$. So, $\forall v \in LV_p[s + 1]$, we either have $v \in LearnedVal_q^{s + 1}$ or $v \in LV_q[s + 1]$. Therefore, we have $LV_p[s + 1]$  $\subseteq$ $LearnedVal_q^{s + 1} \cup LV_q[s + 2]$, which yields $LearnedVal_p^s \subseteq LearnedVal_q^{s + 1}$ for any two processes $p$ and $q$.
\end{proof}
\end{lemma}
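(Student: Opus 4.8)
The plan is to prove the statement by induction on the sequence number $s$, in the equivalent operational form: every command $v \in LV_p[s]$ reappears in $LearnedVal_q^{s+1}$. Throughout I will use the standard invariants of the underlying single lattice-agreement instance (as in \cite{b10}): the value a process decides for a given sequence number was written into the $acceptVal$ set of a majority of processes, any two values decided for the same sequence number are comparable (Lemma \ref{lem:comp_same_sequence}), and the value decided by the first process to complete an instance is contained in the value decided by every later completer of that instance. I will also repeatedly use that $LearnedVal_q^{s+1} \subseteq LearnedVal_q^{s+2}$ holds by definition, and that $acceptVal$ on a process is non-decreasing from the start of an $\mathbf{Agree()}$ call until the $\Delta_1$ line at its very end.

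For the base case $s=0$, fix $v \in LV_p[0]$. When $p$ completes instance $0$ there is a majority $Q$ that accepted $p$'s winning proposal, hence had $v$ in $acceptVal$ at that moment. The point is that the $\Delta_1$ truncation at the end of instance $0$ removes $LV[-1] = \bot$, i.e. nothing, so each $j \in Q$ still carries $v$ when it starts instance $1$. During $q$'s instance $1$, in every round $q$ collects acknowledgements from some majority $Q'$, and $Q \cap Q' \neq \emptyset$; the common process either rejects (handing $q$ a set containing $v$), accepts (so $q$'s own proposal already contains $v$), or replies with a decide acknowledgement. In the first two cases $v$ is never dropped from $q$'s proposal, and in the decide case one uses comparability together with the fact that the first completer of instance $1$ must itself have included $v$ (its accepting majority meets $Q$). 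Either way $v \in LV_q[1] \subseteq LearnedVal_q^1$.

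For the inductive step, assume the claim for $s$. Since $LearnedVal_p^{s+1} = LearnedVal_p^s \cup LV_p[s+1]$ and $LearnedVal_p^s \subseteq LearnedVal_q^{s+1} \subseteq LearnedVal_q^{s+2}$ by the hypothesis, it suffices to show $LV_p[s+1] \subseteq LearnedVal_q^{s+2}$. Fix $v \in LV_p[s+1]$ and let $Q$ be the majority that wrote $v$ into $acceptVal$ during $p$'s instance $s+1$. Here $\Delta_1$ genuinely matters: between that moment and the start of instance $s+2$ each $j \in Q$ performs $acceptVal := acceptVal - LV_j[s]$. If some $j_0 \in Q$ has $v \notin acceptVal_{j_0}^{s+1}$, then $v$ is exactly what the truncation removed, so $v \in LV_{j_0}[s] \subseteq LearnedVal_{j_0}^s$, and the induction hypothesis for $s$ gives $v \in LearnedVal_q^{s+1} \subseteq LearnedVal_q^{s+2}$. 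Otherwise $v \in acceptVal_j^{s+1}$ for every $j \in Q$, and since $acceptVal_j$ only grows throughout instance $s+2$, each $j \in Q$ carries $v$ for the entire duration of its instance $s+2$; the same majority-intersection argument as in the base case then forces $v \in LV_q[s+2]$.

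I expect the main obstacle to be the second branch of the inductive step: showing that $q$ actually \emph{learns} $v$ even when $q$ finishes instance $s+2$ by adopting a decide acknowledgement sent by a process that has already raced ahead. The clean way I plan to close this is to argue that the first process to complete instance $s+2$ must have $v$ in the value it decides --- its accepting majority intersects $Q$, and the intersecting process, still carrying $v$, could only have accepted a proposal containing $v$ --- and then propagate this to every later completer via comparability of instance-$(s+2)$ decisions, so that any decide acknowledgement $q$ receives for $s+2$ already contains $v$. A secondary delicate point is the exact placement of $\Delta_1$ relative to the increment $s := s+1$, so that $acceptVal_j^{s+1}$ is read as the value \emph{after} removing $LV_j[s]$; getting this bookkeeping right is what makes the case split above exhaustive.
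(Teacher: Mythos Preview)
Your proof follows the same skeleton as the paper's: induction on $s$, reduction of the inductive step to $LV_p[s+1] \subseteq LearnedVal_q^{s+1} \cup LV_q[s+2]$, and the identical two-way case split on whether the $\Delta_1$ truncation removed $v$ from some $j \in Q$. You are in fact more careful than the paper, which simply asserts ``$q$ must learn $v$ since $v$ is contained in the $acceptVal$ of a majority'' without ever mentioning the decide-ACK path.

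That said, the device you propose for closing the decide-ACK case does not work as stated. The claimed invariant ``the value decided by the first process to complete an instance is contained in the value decided by every later completer'' is false: with three processes one can have $p_3$ accept $p_2$'s small proposal $\{b\}$ first and then $p_1$'s larger proposal $\{a,b\}$, after which $p_1$ may complete (with value $\{a,b\}$) before $p_2$ completes (with value $\{b\}$). Comparability (Lemma~\ref{lem:comp_same_sequence}) only tells you the two decided values are comparable, not which is smaller, so ``propagate to every later completer via comparability'' does not force $v$ into the later completer's value. The fix is to drop the first-completer detour and argue directly, by induction on completion order, that \emph{every} completer of instance $s+2$ has $v$: a completer via majority-accept meets $Q$ at a process whose $acceptVal$ already contains $v$, hence the accepted proposal contains $v$; a completer via decide-ACK adopts a join of earlier completers' values, each containing $v$ by the completion-order induction; and a completer that exhausts all $f+1$ rounds picks up $v$ no later than the end of round~1 by the same quorum-intersection. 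This is presumably what the paper's one-line appeal to ``any two majorities intersect'' is meant to encode; your version, once patched, is strictly more explicit.
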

Now, let us prove {\em Comparability}.
\begin{lemma}\label{lem:comparability}
For any sequence number $s$ and $s'$, $LearnedVal_p^s$ and $LearnedVal_q^{s'}$ are comparable for any two correct processes $p$ and $q$.
\begin{proof}
For $s' > s$ or $s' < s$, Lemma \ref{lem:stability} gives the result. So, we only need to consider the case $s = s'$. We prove this case by induction on sequence number $s$. 

The base case $s = 0$ immediately follows from Lemma \ref{lem:comp_same_sequence}. 

For the induction case, assume for sequence number $s$, $LearnedVal_p^s$ and $LearnedVal_q^{s}$ are comparable for any two processes $p$ and $q$. Need to show $LearnedVal_p^{s + 1}$ and $LearnedVal_q^{s + 1}$ are comparable. Equivalently, we can show $LearnedVal_p^{s} \cup LV_p[s + 1]$ and $LearnedVal_q^{s} \cup LV_q[s + 1]$ are comparable. Without loss of generality, assume $LearnedVal_p^{s} \subseteq LearnedVal_q^{s}$, the proof for the other case is similar. Let us consider the following two cases.\\
Case 1: $LV_p[s + 1] \subseteq LV_q[s + 1]$. By the assumption, we have $LearnedVal_p^{s} \cup LV_p[s + 1] \subseteq LearnedVal_q^{s} \cup LV_q[s + 1]$. \\
Case 2: $LV_q[s + 1] \subset LV_p[s + 1]$. From Lemma \ref{lem:stability}, we have $LearnedVal_q^{s} \subseteq LearnedVal_p^{s + 1} = LearnedVal_p^{s} \cup LV_p[s + 1]$. Therefore, $LearnedVal_q^{s} \cup LV_q[s + 1] \subseteq LearnedVal_p^{s} \cup LV_p[s + 1]$.
\end{proof}
\end{lemma}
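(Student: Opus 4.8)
The plan is to reduce everything to the diagonal case $s = s'$ and then induct on $s$. First I would observe that when $s \neq s'$, say $s' > s$, repeated application of Lemma~\ref{lem:stability} along the chain $s, s+1, \dots, s'$ yields $LearnedVal_p^{s} \subseteq LearnedVal_q^{s'}$, so the two values are comparable; the case $s' < s$ is symmetric. Hence it suffices to prove that $LearnedVal_p^{s}$ and $LearnedVal_q^{s}$ are comparable for every $s$ and every pair of correct processes $p, q$.

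For this I would induct on $s$. The base case $s = 0$ is immediate from Lemma~\ref{lem:comp_same_sequence}, since $LearnedVal_p^0 = LV_p[0]$. For the inductive step, write $LearnedVal_p^{s+1} = LearnedVal_p^{s} \cup LV_p[s+1]$ and similarly for $q$. By the induction hypothesis $LearnedVal_p^{s}$ and $LearnedVal_q^{s}$ are comparable, and by Lemma~\ref{lem:comp_same_sequence} $LV_p[s+1]$ and $LV_q[s+1]$ are comparable; without loss of generality assume $LearnedVal_p^{s} \subseteq LearnedVal_q^{s}$. If additionally $LV_p[s+1] \subseteq LV_q[s+1]$, then taking unions gives $LearnedVal_p^{s+1} \subseteq LearnedVal_q^{s+1}$ directly.

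The interesting case is the mixed one, where $LearnedVal_p^{s} \subseteq LearnedVal_q^{s}$ but $LV_q[s+1] \subset LV_p[s+1]$: here the two ``halves'' of the join point in opposite directions, so comparability is not obtained merely by unioning the two inequalities. The key step is to invoke Lemma~\ref{lem:stability} once more, in the form $LearnedVal_q^{s} \subseteq LearnedVal_p^{s+1} = LearnedVal_p^{s} \cup LV_p[s+1]$; combined with $LV_q[s+1] \subseteq LV_p[s+1]$ this gives $LearnedVal_q^{s+1} = LearnedVal_q^{s} \cup LV_q[s+1] \subseteq LearnedVal_p^{s} \cup LV_p[s+1] = LearnedVal_p^{s+1}$. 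I expect this mixed case to be the only real obstacle; the rest is routine bookkeeping with unions, and the whole argument leans on having Stability (Lemma~\ref{lem:stability}) already established, which is also where the subtlety introduced by the $\Delta_1$ truncation has already been absorbed.
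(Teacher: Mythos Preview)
Your proposal is correct and follows essentially the same route as the paper's proof: reduce to $s = s'$ via Lemma~\ref{lem:stability}, induct on $s$ using Lemma~\ref{lem:comp_same_sequence} for the base case, and in the inductive step split on the two possible orderings of $LV_p[s+1]$ and $LV_q[s+1]$, invoking Lemma~\ref{lem:stability} again in the mixed case. If anything you are slightly more explicit than the paper, noting that Lemma~\ref{lem:comp_same_sequence} is what justifies the exhaustive two-way case split on $LV_p[s+1]$ versus $LV_q[s+1]$, and that the $s \neq s'$ reduction may require iterating Lemma~\ref{lem:stability}.
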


\begin{theorem}\label{theo:corretness_gla_alpha}
Algorithm $GLA_\Delta$ solves the generalized lattice agreement problem when a majority of processes is correct.
\begin{proof}
\textit{Validity} holds since any learned value is the join of a subset of values received. \textit{Stability} follows from Lemma \ref{lem:stability}. \textit{Comparability} follows from Lemma \ref{lem:comparability}. \textit{Liveness} follows from the termination of lattice agreement. 
\end{proof}
\end{theorem}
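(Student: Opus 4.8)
The plan is to discharge the four requirements of generalized lattice agreement --- \emph{Validity}, \emph{Stability}, \emph{Comparability}, \emph{Liveness} --- one at a time, leaning on the lemmas already established. \emph{Validity} is the easiest: I would argue by induction on the execution that every value a process ever stores in $acceptVal$, and hence every value it writes into $LV[s]$, is a join of a subset of the input values it has received. The only places $acceptVal$ changes are the join with $buffVal$ at the start of \textbf{Agree()}, the join with the reject-ACK values $tmp$ inside the loop, and the assignment $acceptVal := v_j$ on receiving a proposal; in each case the right-hand side is, inductively, a join of received inputs, and $buffVal$ itself is only ever grown by joins of received values (via \textit{ReceiveValue} and the $\Delta_2$ line). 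The decided value $val$ is either $\sqcup D$ (a join of other processes' decided values) or the current $acceptVal$, so it too is a join of received inputs. \emph{Stability} and \emph{Comparability} are then immediate: taking $p = q$ in Lemma \ref{lem:stability} shows the sequence of learned values at any single correct process is non-decreasing, and Lemma \ref{lem:comparability} is exactly the \emph{Comparability} requirement.

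The substantive work is \emph{Liveness}. I would split it into two claims. First, every individual lattice agreement instance (the loop for a fixed sequence number $s$) terminates at every correct process: with a majority correct, each round's \textbf{wait for} $n-f$ \textit{ACK}s returns, and the $f+1$-round bound inherited from \cite{b10} guarantees that by the last round a correct process has either seen a \textit{decide} ACK or gathered a majority of \textit{accept} ACKs, so the loop breaks. Hence every correct process completes sequence $s$ and advances to $s+1$, so sequence numbers grow without bound at all correct processes. Second, fix a value $x$ received by a correct process $p$. It enters $p$'s $buffVal$, and the next time $p$'s guard is enabled (which it is, since $buffVal \neq \bot$) $x$ is folded into $acceptVal$ and proposed. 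If $p$ is fast it simply decides a value $y$ with $x \leq y$; if $p$ is slow, its proposal carries a stale sequence number $s' < s$ at some faster correct process $q$, and the $\Delta_2$ line makes $q$ absorb $x$ into its own $buffVal$, after which $q$ proposes and decides a value containing $x$ at a later sequence. Either way some correct process decides a value $y$ with $x \leq y$ at some sequence $s^\ast$, and then Lemma \ref{lem:stability} propagates $y$ into $LearnedVal^{s^\ast+1}$ of every correct process, establishing $x \leq LearnedVal_q^{s^\ast+1}$ for all correct $q$.

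The main obstacle I anticipate is making the liveness argument airtight with the $\Delta_1$ truncation and the $\Delta_2$ re-buffering acting together. The worry is that a value is ``dropped'': once $x$ has been learned at sequence $s-1$ by some process, $\Delta_1$ removes it from that process's $acceptVal$, so a later re-proposal need not re-advertise $x$ --- one must check this is harmless, which it is precisely because Lemma \ref{lem:stability} already guarantees $x$ reaches $LearnedVal^{s}$ of every correct process once it lies in a learned set at sequence $s-1$. The second worry is that $\Delta_2$ could in principle bounce a slow process's value between buffers indefinitely; ruling this out needs the progress fact from the first claim above --- sequence numbers strictly increase at correct processes --- so that a value repeatedly absorbed into some correct process's $buffVal$ is, within finitely many of that process's instances, actually proposed and decided rather than merely forwarded. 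Once these two points are settled, the four properties assemble into the theorem.
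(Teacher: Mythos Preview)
Your proposal is correct and follows the same decomposition as the paper: \emph{Validity} by inspection of how values are formed, \emph{Stability} and \emph{Comparability} via Lemmas~\ref{lem:stability} and~\ref{lem:comparability}, and \emph{Liveness} from termination of each lattice-agreement instance. The paper's own proof is a four-line sketch that simply names these ingredients; your Liveness argument (termination of each instance, unbounded growth of sequence numbers, the $\Delta_2$ forwarding for slow proposers, and the use of Lemma~\ref{lem:stability} to propagate a learned value to all correct processes) fills in detail the paper leaves implicit, but does not diverge from it.
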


\section{Improve the Procedure for Implementing a Linearizable RSM}
The paper \cite{b6} gives a procedure to implement a linearizable RSM by combining CRDT and a protocol for the generalized lattice agreement problem. The basic idea in \cite{b6} is to treat reads and writes separately. For a write command, say $cmd_w$, the receiving proposer invokes a lattice agreement instance with this write operation as input value and then wait until $cmd_w$ is included into its learned commands set (The learned command set stores all learned commands received from learners). Then, it returns response for $cmd_w$. For a read command, say $cmd_r$, the receiving proposer creates a null command, which is a command that has no effect. It invokes a lattice agreement instance with this null command and waits until its command is in the learned commands set. Then, it executes all commands stored in the learned command set and returns the response for $cmd_r$. In this paper, we propose some simple optimizations for this procedure. 

To tackle the aforementioned problems, we present the following two optimizations for the linearizable SMR procedure proposed in \cite{b6}. 

\subsection{Reduce Burden of Read}
In the procedure proposed in \cite{b6}, the learned commands are only executed when there is a read command and a read command can only return when the server completes executing all current learned commands. This results in high latency of a read operation. In order to reduce the latency of read operation and balance between reads and writes, each server applies newly learned commands whenever it completes a sequence number. 

Besides, for each read command, before returning a response, a null operation needs to be created and learned. This is not necessary. We only need to create one null operation for all read operations in the commands buffer and all those reads can be executed when that single null operation is learned.

\subsection{Remove Reads from Input Lattice}
In procedure proposed in \cite{b6}, the input lattice is formed by all update commands and all null commands, which is not necessary. The null commands are actually read commands. Since only updates change the state of the server and reads do not, only the lattice formed by all updates need to be considered. In the lattice agreement protocol, a basic and highly frequent operation for a process is to check whether a received proposal value, i.e, a set of commands, contains its current accept command set. Since we only need to consider the lattice formed by all the updates, a process only needs to check whether the subset of updates in the proposed command set contains the subset of updates in its current accept command set. 

\section{LaRSM vs Paxos}
In this section, we compare LaRSM and Paxos from both theoretical and engineering perspective. Table \ref{tab:LaRSM_Paxos} shows the theoretical perspective. For the engineering perspective, since there is no termination guarantee when multiple proposers exist in the system, Paxos is typically  deployed with only one single proposer (the leader). Only the leader can handle handle requests from the clients. Thus, in a typical deployment the leader the leader becomes the bottleneck and the throughput of the system is limited by the leader's resources. Besides, the unbalanced communication pattern limits the utilization of bandwidth available in all of the network links connecting the servers. However, there can be multiple proposers in LaRSM since termination is guaranteed. Multiple proposers can simultaneously handle requests from clients, which may yield better throughput. In failure case, new leader needs to be elected in Paxos and there could be multiple leaders in the system. During this time, the protocol may not terminate because of conflicting proposals. Even though there are ways to reduce conflicting proposals, generally it needs more rounds to learn a command when there are multiple leaders. However, a failure of a replica in LaRSM has limited impact on the whole system. This is because other replicas can still handle requests from clients as long as less than a majority of replicas has failed. 
In a typical deployment of Paxos, {\em pipelining} \cite{b1} is often applied to increase the throughput of the system.  In {\em pipelining}, the leader can concurrently issue multiple proposals. In LaRSM, however, there can be at most one proposal for each replica at any given time, because the $Stability$ and $Comparability$ of generalized lattice agreement require that next proposal can be issued only when current proposal terminates. Thus, LaRSM does not support {\em pipelining}. 

In summary, compared with Paxos, the main advantage of LaRSM is that it can have multiple proposers concurrently handling requests and the main disadvantage is that it does not support {\em pipelining} for each proposer.

\section{Evaluation}
In this section, we evaluate the performance of LaRSM and compare with SPaxos. 
Although the lattice agreement protocol proposed in this paper has round complexity of $O(\log f)$, it has large constant, which is only advantageous when the number of processes is large. In real case, the number of replicas is usually small, often 3 to 5 nodes. Thus, instead of using the lattice agreement protocol proposed in this paper, we use the lattice agreement protocol from \cite{b10} which runs in $f + 1$ asynchronous round-trips in our implementation. In order to evaluate LaRSM, we implemented a simple replicated state machine which stores a Java hash map data structure. We implement the hash map date structure to be a CRDT by assigning a timestamp to each update operation and maintain the last writer wins semantics. We measure the performance of SPaxos and our implementation in the following three perspectives: performance in the normal case (no crash failure), performance in failure case, and performance under different work loads. 

All the experiments are performed in Amazon’s EC2 infrastructure with micro instances. The micro instance has variable ECUs (EC2 Compute Unit), 1 vCPUs, 1 GBytes memory, and low to moderate network performance. All servers run Ubuntu Server 16.04 LTS (HVM) and the socket buffer sizes are equal to 16 MBytes. All experiments are performed in a LAN environment with all processes distributed among the following three availability zones: US-West-2a, US-West-2b and US-West-2c. 

 The keys and values of the map are string type. We limit the range of keys to be within 0 to 1000. Two operations are support: update and get. The update operation changes the value of a specific key. The get operation returns the value for a specific key. A client execute one request per time and only starts executing next request when it completes the first one. The request size is 20 bytes. For each request, the server returns a response to indicates its completeness. In order to compare with SPaxos, we set its crash model to be CrashStop. In this model, SPaxos would not write records into stable storage. In SPaxos, batching and pipelining are implemented to increase the performance of Paxos. There are some parameters related to those two modules: the batch size, batch waiting timeout and the window size. The batch size controls how many requests the batcher needs to wait before starting proposing for a batch. The batch waiting timeout controls the maximum time the batch can wait for a batch. The window size is the maximum number of parallel proposals ongoing. We set the batch size to be 64KB, which is the largest message size in a typical system. We set the batch timeout according to the number of clients from 0 to 10 at most. The window size is set to 2 as we found that increasing the window size further does not increase the performance in our evaluation.

\subsection{Performance in Normal Case}
In this experiment, we build a replicated state machine system with three instances. We test the throughput of the system and latency of operations while keep increasing the number of requesting clients. The load from the clients are composed of 50\% writes and 50\% reads. Figure \ref{fig:clients_throught} shows the throughput change of SPaxos and LaRSM. The throughput is measured by the number of requests handled per second by the system. The latency is the average time in milliseconds taken by the clients to complete execution of a request. We can see from Fig \ref{fig:clients_throught}, as we increase the number of requesting clients, the throughput of both SPaxos and LaRSM increase until there are around 1000 clients. At that point, the system reaches its maximum handling capability. If we further increase the clients number, the throughput of both LaRSM and SPaxos does not change in a certain range and begins to decrease if there are more requesting clients. This is because both systems do not limit the number of connections from the client side. A large number of clients connection results in large burden on IO, decreasing the system performance. Comparing SPaxos and LaRSM, we can see that LaRSM always has better throughput than SPaxos. The maximum gap is around 10000 requests/sec.

Figure \ref{fig:clients_latency} shows the latency change as the number of clients increases. In both LaRSM and SPaxos, read and write perform the same procedure, thus their latency should be same. So, in our evaluation, we just say operation latency. From Figure \ref{fig:clients_latency}, we find that operation latency of LaRSM is always increasing. As we increase the number of clients, the latency of SPaxos decreases first up to some point and then begins to increase. This performance is due to the fact that the latency of the average response time of all clients and SPaxos has a batching module which batches multiple requests from different clients to propose in a single proposal. Therefore, initially when there are very few clients, they can only propose a small number of requests in a single proposal, which makes the latency relatively higher. While the number of clients increases, more requests can be proposed in one single batch, thus the average latency for one client is decreased. Later on, if the number of clients increases further, the handling capability limit of the system increases the operation latency. Comparing SPaxos and LaRSM, we find that the latency of LaRSM is always around 5ms smaller.
\begin{figure}[htb]
\includegraphics[width=3.5in]{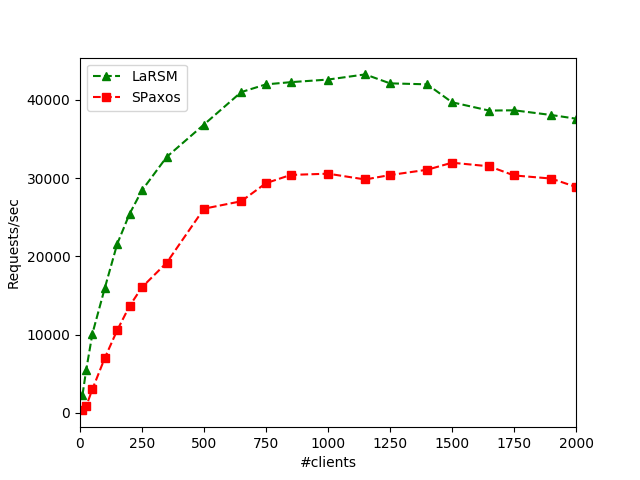}
\centering
\caption{Throughput of LaRSM and SPaxos with increasing number of clients\label{fig:clients_throught}}
\end{figure}

\begin{figure}[ht]
\includegraphics[width=3.5in]{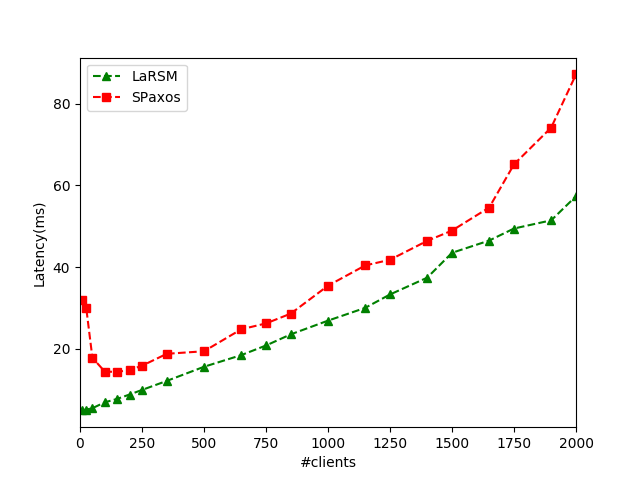}
\centering
\caption{Latency of LaRSM and SPaxos with Increasing Number of Clients\label{fig:clients_latency}}
\end{figure}

\subsection{Performance in Failure Case}
In this section, we evaluate the performance of both LaRSM and SPaxos in the case of failure. In this experiment, the replicated state machine system is composed of five replicas. There are 100 clients that keep issuing requests to the system. In LaRSM, since all replicas perform the same role and can handle requests from the clients concurrently. Thus, for loading balancing, each client randomly selects a replica to connect. Each client has a timeout, unlike SPaxos, this timeout is typically small. Timeout on an operation does not necessarily mean failure of the connected replica. It might also due to overload of the replica. In this case, the client randomly chooses another replica to connect. However, in SPaxos, the timeout set for a client is usually used to suspect the leader. That is, when an operation times out, most likely the leader has failed. Thus, the timeout in SPaxos is typically large. 

We run the simulation for 40 seconds. The first 10 seconds is for the system to warm up, so we do not record the throughput and latency data. A crash failure is triggered at 25th second after the start of the system. For LaRSM, we randomly shut down one replica since all replicas are performing the same role. For SPaxos, we shut down the leader, since crash of a follower does not have much impact on the system. Figure \ref{fig:failure_throughput} shows the throughput of both LaRSM and SPaxos. Figure \ref{fig:failure_latency} shows the latency change. From Figure \ref{fig:failure_throughput} and Figure \ref{fig:failure_latency}, for LaRSM we can see that when the failure occurs, the throughput drops sharply from around 20K requests/sec to around 15K requests/sec, but not to 0. However, the throughput of SPaxos drops to zero when leader fails. The latency of LaRSM only increases slightly, whereas the latency of SPaxos goes to infinity (Note that in the figure it is shown as around 500ms). This is because  when leader fails, SPaxos stops ordering requests, thus no requests are handled by the system. For LaRSM, the clients which are connected to the failed replica, would have timeout on their current requests and then randomly connect to another replica. As discussed before, this timeout is usually much smaller than the timeout for suspecting a failure in SPaxos. Thus, the latency of a client in LaRSM only increases by a small amount. After the failure, the throughput of LaRSM remains around 16K requests/sec, which is because now there is one less replica in the system and the handling capability of the system decreases. For SPaxos, after a new leader is selected, the throughput increases to be a level slightly smaller than the throughput before the failure and the latency also decreases to be slightly higher than the latency before the failure. We also find that even though the throughput of LaRSM drops when a failure occurs, it still has better throughput than SPaxos, which indicates the good performance of LaRSM.

\begin{figure}[t]
\includegraphics[width=3.5in]{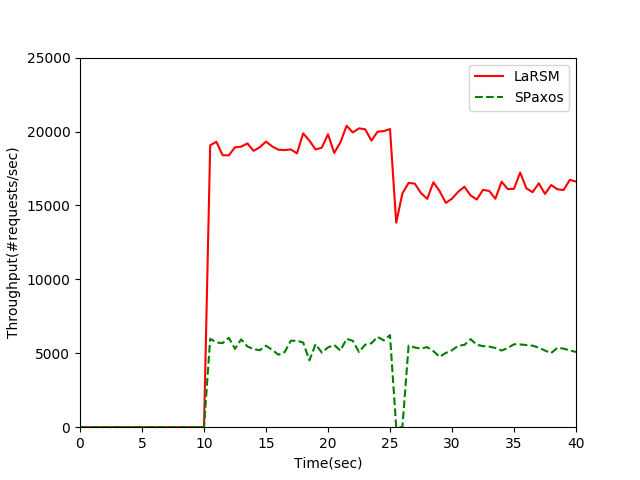}
\centering
\caption{Throughput in Case of Failure\label{fig:failure_throughput}}
\end{figure}

\begin{figure}[htb]
\includegraphics[width=3.5in]{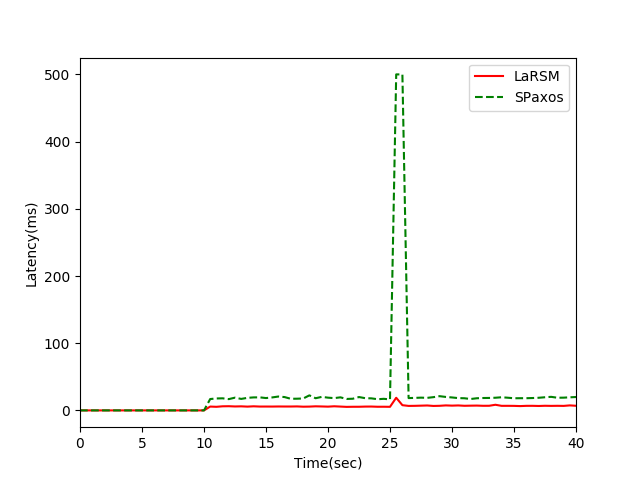}
\centering
\caption{Latency in Case of Failure\label{fig:failure_latency}}
\end{figure}

\subsection{Performance under Different Loads}
In this part, we evaluate the performance of LaRSM on different types of work loads. This evaluation is done in a system of three replicas with 500 clients keep issuing requests. We measure the throughput and latency as we increase the ratio of reads in a work load. Figure \ref{fig:loads_throught} and Figure \ref{fig:loads_latency} give the throughput and latency change respectively. It is shown in those two figures that as the ratio of reads increases in a work load, the throughput of the system increases and the operation latency decreases. This confirms our optimization for the procedure to implement a linearizable RSM. As the reads ratio increases, the writes ratio decreases. Note that in a lattice agreement instance the input lattice is formed only by all the writes. When the number of writes is small, the proposal command set would be small and the message size would be small as well. Thus, the system can complete a lattice agreement instance faster. This shows that the performance LaRSM is even better for settings with fewer writes. 

\begin{figure}[htb]
\includegraphics[width=3.5in]{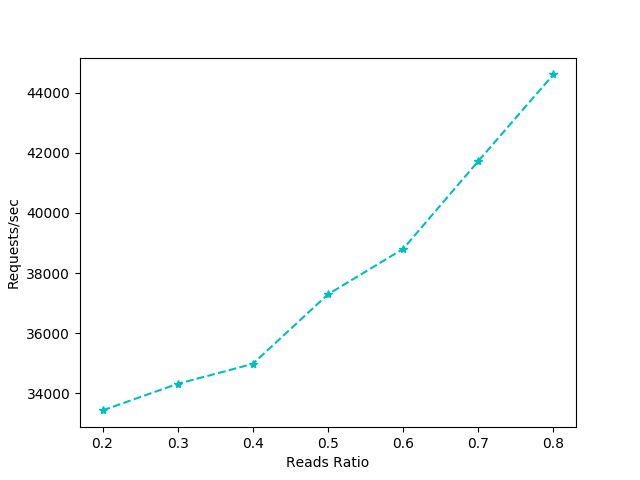}
\centering
\caption{Throughput under different reads ratio\label{fig:loads_throught}}
\end{figure}

\begin{figure}[htb]
\includegraphics[width=3.5in]{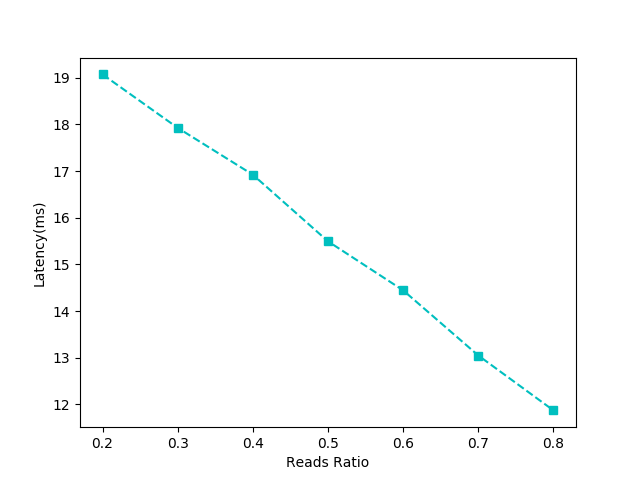}
\centering
\caption{Latency under different reads ratio\label{fig:loads_latency}}
\end{figure}

\subsection{Scalability Issue} 
Although LaRSM achieves good performance when the number of replicas in the system is small, its performance degenerates when the number of replicas increases, i.e, it is not scalable. The bad scalability is due to the fact that the lattice agreement protocol requires number of rounds that depends on the maximum number of crash failures the system can tolerate, which is typically set to be $\frac{n - 1}{2}$. In this case, as the number of replicas increases, the lattice agreement requires more rounds to complete. Therefore, LaRSM does not scale well. 

\section{Conclusion}
In this paper, we first give an algorithm to solve the lattice agreement problem in $O(\log f)$ rounds asynchronous rounds, which is an exponential improvement compared to previous $O(f)$ upper bound. This result also indicates that lattice agreement is a much weaker problem than consensus. In the second part, we explore the application of lattice agreement to building linearizable RSM. We first give improvements for the generalized lattice agreement protocol proposed in previous work to make it practical to implement a linearizable RSM. Then we perform experiments to show the effectiveness of our proposal. Evaluation results show that using lattice agreement to build a linearizable RSM has better performance than conventional consensus based RSM technique. Specifically, our implementation yields around 1.3x times throughput than SPaxos and incurs smaller latency, in normal case. In the failure case, LaRSM still continues to handle requests from clients, although its throughput decreases by some amount, whereas, SPaxos based protocol stops handling requests during the leader failure. 



\bibliography{ref}
\end{document}